\newcommand{\af}{\alpha}
\newcommand{\C}{\mathbb{C}}
\newcommand{\G}{\mathcal{G}}
\renewcommand{\P}{\mathcal{P}}
\newcommand{\Q}{\mathcal{Q}}
\newcommand{\cR}{\mathcal{R}}
\newcommand{\R}{\mathbb{R}}
\newcommand{\T}{\mathcal{T}}
\newtheorem{asspt}{Assumption}
\newtheorem{definition}{Definition}
\newtheorem{corollary}{Corollary}
\newtheorem{prop}{Proposition}
\newcommand{\BE}{\begin{equation}}  
\newcommand{\EE}{\end{equation}}
\begin{document}


\centerline{\bf\large An Evolver program for weighted Steiner trees}
\ \\

\centerline{Henrique Botelho, Francisco Zampirolli, Val\'erio Ramos Batista\footnote{Email: valerio.batista@ufabc.edu.br, \url{https://sites.google.com/site/vramos1970}, Federal University of ABC, av. dos Estados 5001, St Andr\'e-SP, 09210-580, Brazil}}

\begin{abstract}
We present an algorithm to find near-optimal weighted Steiner minimal trees in the plane. The algorithm is implemented in Evolver programming language, which already contains many built-in energy minimisation routines. Some are invoked in the program, which enable it to consist of only 183 lines of source code. Our algorithm reproduces the physical experiment of a soap film detaching from connected pins towards a stable configuration. In the non-weighted case comparisons with GeoSteiner are drawn for terminals that form a pattern.
\end{abstract}
\ \\
{\bf Keywords}: Weighted Steiner Minimal Trees; Surface Evolver.
\\
\noindent
{\bf MSC[2010]:} 68U05


\section{Introduction}
\label{intro}

It is hard to track back the history involving lab experiments and theoretical knowledge. Regarding soap films the first documented results are due to the physicist J. Plateau in the 19th century~\cite{P}. Among others he worked with wire structures dipped in soapy water and described the resulting films as area-minimising surfaces, which characterises local equilibrium states.

Many of such equilibrium states are found in Nature. For instance, in~\cite{KF} the authors show the importance of these surfaces in crystallography, and a still incomplete geometrical classification of crystal structures was later given in~\cite{H}. See a further contribution to this classification in~\cite{Gui}.

Also there are theories of Steiner trees that arose from soap film experiments. Take two parallel plates connected by pins and dip them into soapy water. The resulting film adjusts to an equilibrium state constituted by strips that together connect all of the pins. When looked from above and perpendicularly to the plates, the film represents a graph whose edges and vertices are the strips and their meetings, respectively. 

Some of the meetings occur at the pins, and the film turns into a tree up to flicking some of the strips. In the nomenclature of~\cite{GP} we then get a {\it relatively minimal tree}, and also a {\it Steiner tree} providing the pins have negligible thickness. It is known that such a tree is an equilibrium state and therefore a local minimum of total length. See~\cite{DKR} for nice discussion and pictures, three of which are reproduced in Fig.~\ref{soap}.

\begin{figure}[ht!]
\center
\includegraphics[scale=.39]{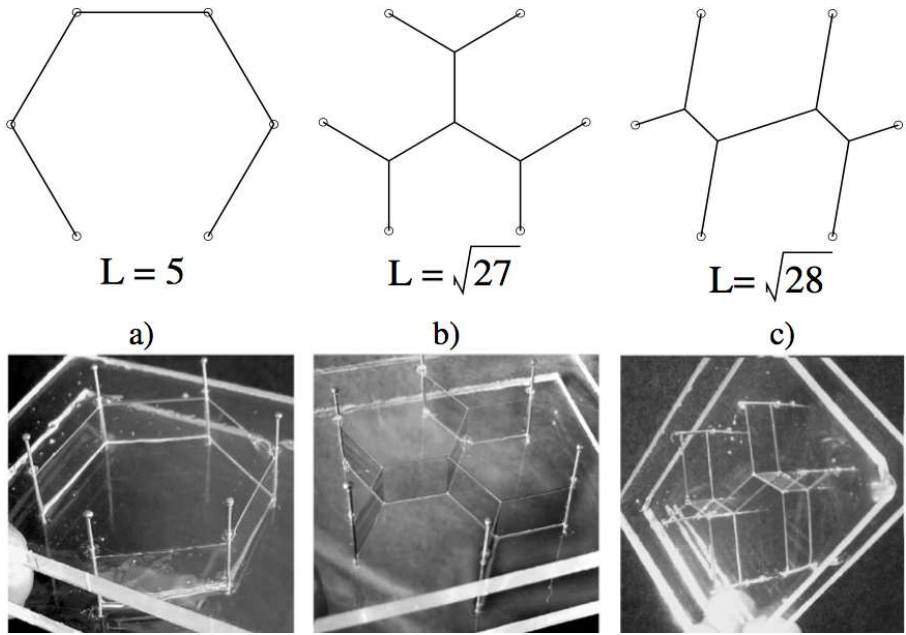}
\caption{Soap Steiner trees of length L for types (a) hexagon; (b) cog; (c) zigzag~\cite[Fig.2]{DKR}.}
\label{soap}
\end{figure}

Of course, it is not always feasible to resort to lab experiments. For arbitrary number and position of pins one would have to spend hours in each case just to observe the resulting Steiner trees. Physical experiments in virtual environment are therefore much easier to perform, and this is one of the objectives of our present paper. Here we consider ideal pins of zero thickness but treat other particularities. For instance, when the plates are dipped the configuration of the film will depend on the manner in which they are removed from the soapy water, and later we shall explain how to customise this procedure with our program.

In our setting the pins belong to the vertices of a tree $\T=(V,E)$, and each edge works as a film strip. However, for any given $V$ our program also seeks the {\it weighted Steiner minimal} $\T$ as defined in~\cite[Sect.2]{CNR}. There the authors prove that any such $\T$ is a Steiner tree, which only coincides with the Steiner minimal tree in very special cases (for instance, when all weights are equal).

Now we are going to use the abbreviations summarised in Table~\ref{abbrev} for the sake of concision:

\begin{table}[ht!]
\centering
\caption{List of abbreviations}\smallskip
\begin{tabular}{|l|c|}\hline
\hspace{2cm}{\bf Name}         & {\bf Initials} \\ \hline\hline
Steiner Tree                   & ST             \\ \hline
Minimal Spanning Tree          & MST            \\ \hline
Steiner Minimal Tree           & SMT            \\ \hline
Weighted Minimal Spanning Tree & WMST           \\ \hline
Weighted Steiner Minimal Tree  & WSMT           \\ \hline
\end{tabular}
\label{abbrev}
\end{table}

Our program runs with the {\it Surface Evolver}~\cite{B}, which is a general-purpose simulator. With Evolver physical experiments can be performed in a completely virtual environment, and one can easily add complexity to the model, or even adapt it for further developments. Firstly introduced in 1989, now Evolver's most recent version is 2.70~\cite{B} with several applications in many Areas of Knowledge like Aerodynamics~\cite{Frank}, Fluid Dynamics~\cite{Chen,Cunsolo,Caio} and Medicine~\cite{Zanka}. Furthermore, Evolver is endowed with several built-in energy minimisation routines which enable saving a lot of programming by just invoking them. Hence we were able to implement our program with only 183 lines of source code. This makes both adaptation and maintenance much easier, as we shall explain in the last section.

This work is organised as follows: Sect.~\ref{prelim} gives some basic notations, definitions and results used throughout the paper. Sect.~\ref{backg} shows quite simple examples that are however essential to understand our strategies. Of course, any heuristic has limitations and ours are discussed in Sect.~\ref{limh}. Then Sect.~\ref{meth} is devoted to explaining our method, which in fact consists not only of the Evolver script but also includes a short graphical input and a preprocessing written in MATLAB/Octave. Finally we show our results and draw conclusions in Sects.~\ref{res} and~\ref{conc}, respectively.

\section{Preliminaries}
\label{prelim}

\begin{definition}
{\rm Consider a graph $G=(V,E)$ in $\R^2$ with a weight function $w:V\to\R_+^*$ and 0-1 adjacency matrix $a_{jk}$. The {\it weighted total length} of $G$ is
\BE
  ||G||=\sum_{j<k}a_{jk}||V_j-V_k||,
\label{met}
\EE
where $||V_j-V_k||=\frac{1}{2}(w_j+w_k)|V_j-V_k|$ is the {\it connection cost} between $V_j$ and $V_k$, and $|V_j-V_k|$ is their Euclidean distance. For the non-weighted case $w\equiv 1$ we write $|G|$ instead of $||G||$ in~(\ref{met}).}
\label{def1}
\end{definition}

As showed in~\cite{CNR}, when $G=(V,E)$ is embedded in $\R^2$ we can reduce $||G||$ as follows: for each adjacent pair $AB,\,BC\in E$ that forms a triangle we take its Euclidean Steiner point $S$, $V'=\{S\}\cup V$, $E'=\{AS,BS,CS\}\cup(E\setminus\{AB,BC\})$, $G'=(V',E')$ and $w(S)=\min w(\{A,B,C\})$. Hence $||G'||\le||G||$ with strict inequality when $S\not\in\{A,\,B,\,C\}$, and this process can be repeated at most $\sharp V-2$ times according to \cite[\S3.4]{GP}. This setting was presented in~\cite{CNR} together with a practical application of the following:

\begin{definition}
{\rm Let $G=\T$ be a tree in Definition~\ref{def1} with $||\T||$ as a global minimum. Namely, there is no other graph connecting its vertices that can reduce $||\T||$. In this case we say that $\T$ is a WSMT.}
\label{def2}
\end{definition}

Notice that we could have $S\in\{A,\,B,\,C\}$ as shown in Fig.~\ref{sliding}, in which case $S$ is called {\it inherent} Steiner point.

\begin{figure}[ht!]
\center
\includegraphics[scale=.33]{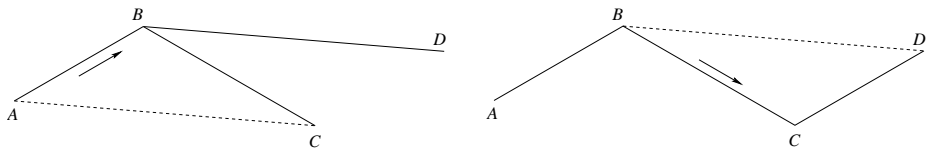}
\caption{Detachment from an initial configuration by sliding to inherent $B$ and $C$.}
\label{sliding}
\end{figure}

As explained in the Introduction our simulation reproduces the physical experiment of getting a soap film after dipping two parallel plates connected by pins in soapy water. Immediately after the plates are taken out the film detaches from an initial configuration in which the pins behave as the vertices of a plane graph $G=(V,E)$, and each edge works as a film strip connecting two pins. The detachment from this initial configuration will result in another plane graph $G'=(V',E')$ where $V'\supset V$ and the elements of $V'\setminus V$ are called (non-inherent) Steiner points. Notice that one might flick some strips for $G'$ to become a tree.

Detachment is due to a physical phenomenon called {\it Marangoni effect}. The forces at a vertex are each parallel to its corresponding incident edge. The intensity of such a force does not depend on the edge length, and therefore we can study the local behaviour at a pin by truncating all of its incident strips to the same length (see Fig.~\ref{mar}(a)). We refer the reader to \cite{Isen} for more details about the Marangoni effect.

Hence $G'$ corresponds to a local minimum of {\it surface tension}, which is directly proportional to $|G'|$. The following proposition gives the geometrical equivalence of this physical fact (see Fig.~\ref{mar}(b) for an illustration):

\begin{figure}[ht!]
\center
\includegraphics[scale=.41]{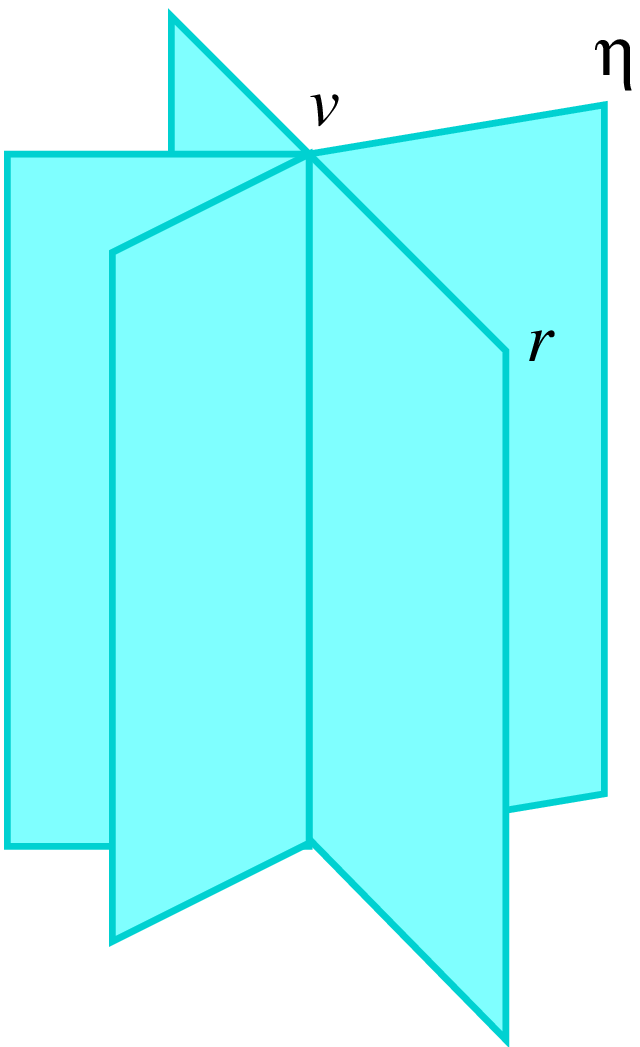}\hspace{1cm}
\includegraphics[scale=.41]{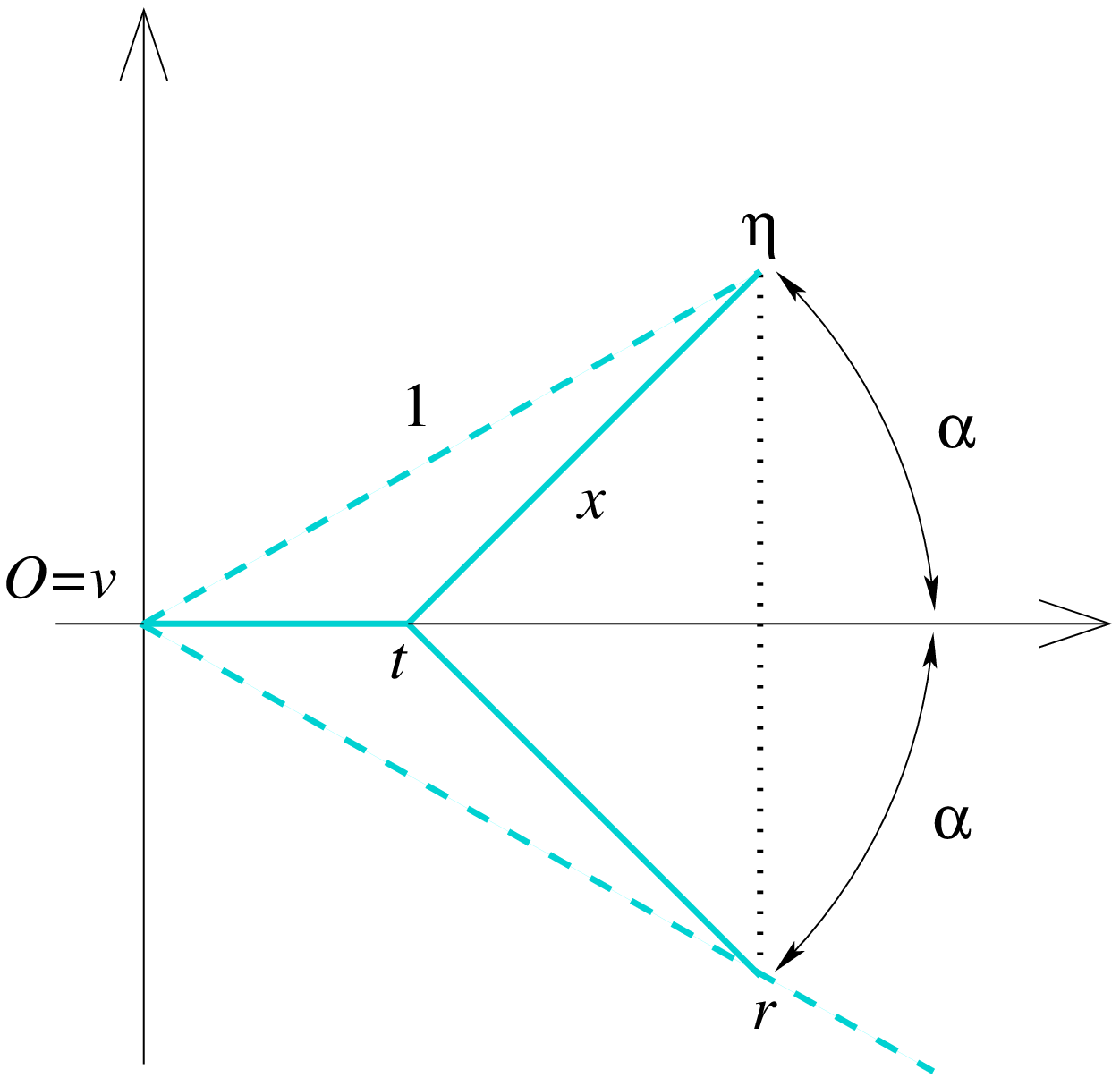}\\

(a)\hfil\hfil(b)\\

\caption{Initial configuration (a) 3d view; (b) view from above.}
\label{mar}
\end{figure}

\begin{prop} Let $G=(V,E)$ be a plane tree $\T$ and take $v\in V$ such that deg$(v)>1$. Hence $|\T|$ can be reduced if and only if $v$ has two adjacent edges that make an angle less than $120^\circ$. 
\label{prop}
\end{prop}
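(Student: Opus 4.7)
The argument will turn on the classical Fermat-Torricelli theorem, applied locally at $v$. For any two neighbours $A,B$ of $v$ and any $S\in\R^2$, define $f(S)=|SA|+|SB|+|Sv|$. The theorem asserts that $f$ attains its minimum either at the interior Fermat point of the triangle $AvB$ (when all three interior angles are below $120^\circ$), or at the triangle vertex whose angle is at least $120^\circ$. It follows that $\min_S f(S)<f(v)=|vA|+|vB|$ if and only if $\angle AvB<120^\circ$, which is the local ingredient both directions of the proposition will rely on. (In the boundary subcase where, say, $\angle A\geq 120^\circ$ and $S=A$, strictness of $f(A)=|AB|+|Av|<|vA|+|vB|$ follows from the law of sines because $\angle v<\angle A$.)

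For the "if" direction, suppose some adjacent pair $vA,vB$ satisfies $\angle AvB<120^\circ$; let $S$ be the minimiser of $f$, which is not $v$, and modify $\T$ by deleting the edges $vA,vB$ and inserting $SA,SB,Sv$ with $S$ a new vertex (degenerate edges being dropped). The resulting graph is still a tree on the same pins, and its total length changes by $f(S)-(|vA|+|vB|)<0$. For the contrapositive of the "only if" direction, assume every pair of adjacent edges at $v$ makes an angle of at least $120^\circ$; since the angles around $v$ sum to $360^\circ$, $\deg(v)\leq 3$, with equality forcing all three to equal $120^\circ$ exactly. For any such pair $vA,vB$ the Fermat-Torricelli theorem gives $f(S)\geq|vA|+|vB|$ for every $S$, so no Steiner-point insertion through that pair shortens $\T$.

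The main obstacle is to make the "only if" direction airtight: I need to argue that Steiner-point insertion through some pair $vA,vB$ exhausts the local modifications at $v$ that could reduce $|\T|$. In the soap-film model this is built in, because pins are fixed while interior vertices have by construction already settled into the $120^\circ$ equilibrium described above, so the angle test at each vertex becomes the correct local optimality criterion and the equivalence closes.
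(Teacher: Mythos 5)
Your argument reaches the same $120^\circ$ threshold as the paper but by a genuinely different route. The paper's proof is a first-variation computation: it truncates the two incident edges to unit length, slides a trial Steiner point a distance $t$ along the bisector, and differentiates $f(t)=t+2(1+t^2-2t\cos\alpha)^{1/2}$ to obtain $f'(0)=1-2\cos\alpha$, negative exactly when the full angle $2\alpha$ is below $120^\circ$. You instead invoke the Fermat--Torricelli theorem for the triangle $AvB$ and compare $\min_S\bigl(|SA|+|SB|+|Sv|\bigr)$ with $|vA|+|vB|$. Your route buys a stronger ``if'' direction: you exhibit the exact optimal insertion point, you handle the degenerate case where it collapses onto a neighbour whose angle is at least $120^\circ$ (your law-of-sines remark amounts to ``the larger angle faces the longer side,'' which is correct), and your minimisation ranges over all of $\R^2$ rather than only over the bisector, so your ``only if'' direction excludes a strictly larger class of local modifications than the paper's does. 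What the paper's derivative computation buys, and yours does not, is the quantitative rate $f'(0)=1-2\cos\alpha$, which Corollary~\ref{cor} reuses verbatim to order the detachments by acuteness; if your proof replaced the paper's, that corollary would need its own small computation. Finally, the gap you flag in the ``only if'' direction---that Steiner-point insertion through a pair of edges at $v$ may not exhaust all length-reducing local modifications---is real, but it is present in the paper's proof as well, and more severely, since only perturbations along the bisector are tested there; neither argument closes it, and you are the more candid of the two about that.
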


\begin{proof}
Place a Euclidean coordinate system with origin $O=v$ and $Ox$ the bisector of the two adjacent edges. Up to re-scaling the shortest edge measures 1 and it makes an angle $\alpha$ with $Ox$. Fig.~\ref{mar}(b) shows that the initial length 2 turns out to be $f(t)=t+2x$, where $0<t<\cos\alpha$. By the law of cosines we compute $f'$ as
\[
  f(t)=t+2(1+t^2-2t\cos\alpha)^\frac{1}{2}\Longrightarrow
  f'(t)=1+\frac{2(t-\cos\alpha)}{(1+t^2-2t\cos\alpha)^\frac{1}{2}}.
\]

Hence $f'(0)=1-2\cos\alpha<0\iff\alpha<60^\circ$. Notice that $f'(0)$ is positive for $\af>60^\circ$. Since $f''(0)>0$ then $f$ is also increasing for $\af=60^\circ$.
\end{proof}

If $|G|>|G'|$ then this local minimum $G'$ is the nearest to $G$. We get the following result:

\begin{corollary}
The soap film starts detaching from the most acute angles of each $v\in V$, all less than $120^\circ$.
\label{cor}
\end{corollary}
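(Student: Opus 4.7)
The plan is to derive the corollary as a direct dynamical consequence of Proposition~\ref{prop}. Surface tension is proportional to $|G|$, so the soap film evolves by (an analogue of) steepest descent on total length; hence the direction and rate of initial detachment at a vertex $v$ are governed by the most negative directional derivative of $|G|$ with respect to an infinitesimal displacement of $v$. I would first note that, at any vertex $v$ with $\deg(v)>1$, the calculation in the proof of Proposition~\ref{prop} can be applied independently to each pair of adjacent edges at $v$: for a pair making angle $2\alpha$, sliding $v$ along the bisector of that pair produces the length function $f(t)=t+2(1+t^{2}-2t\cos\alpha)^{1/2}$ with $f'(0)=1-2\cos\alpha$.

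Next I would interpret this quantity: $f'(0)<0$ exactly when $2\alpha<120^{\circ}$, recovering Proposition~\ref{prop} and confirming that only pairs with angle below $120^{\circ}$ contribute to detachment. Among such pairs, $f'(0)=1-2\cos\alpha$ is a strictly decreasing function of $\cos\alpha$, hence strictly increasing in $2\alpha$ on $(0,120^{\circ})$. Therefore the \emph{most negative} slope, i.e.\ the steepest initial length decrease (equivalently, the largest initial release of surface tension), is attained at the pair realising the smallest angle at $v$.

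Finally I would assemble the pieces: applying this observation at every $v\in V$, the Marangoni-driven film, which evolves so as to decrease surface tension fastest, must start its motion along the bisector of the most acute pair of incident edges at each vertex where such a pair subtends less than $120^{\circ}$. Vertices all of whose incident pairs meet at $\ge 120^{\circ}$ are in equilibrium by Proposition~\ref{prop} and do not initiate detachment. This is exactly the statement of the corollary.

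The main obstacle, if any, is conceptual rather than computational: one must justify that ``starts detaching from'' is correctly modelled by the sign and magnitude of $f'(0)$ along the bisector. I would address this by remarking that, by symmetry of the two-edge configuration and by the truncation argument preceding Proposition~\ref{prop} (all incident strips have equal effective length at $v$), the bisector is the gradient direction of the two-edge contribution to $|G|$ at $v$, so comparing $f'(0)$ across pairs is indeed comparing the pair-wise contributions to the overall gradient at $v$.
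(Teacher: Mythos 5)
Your proposal is correct and follows essentially the same route as the paper: the paper's proof is exactly the observation that $f'(0)=1-2\cos\alpha$ from Proposition~\ref{prop} is more negative for smaller $\alpha$, so the length decreases fastest at the most acute angles. Your version merely spells out the pairwise comparison and the steepest-descent interpretation that the paper leaves implicit.
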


\begin{proof}
In the demonstration of Proposition~\ref{prop} we saw that $f'(0)=1-2\cos\af$, hence the smaller $\af$ the faster $|G|$ decreases to $|G'|$.
\end{proof}

As we have mentioned at the Introduction the detachment of a soap film $G$ from pins is a physical phenomenon that can be performed in a completely virtual environment, and we chose the Surface Evolver for this purpose. Of course, our program considers not only Corollary~\ref{cor} but also further results that we shall present in the next sections.

Because of Proposition~\ref{prop} up to flicking the odd small film strip of $G'$ one gets an ST that we call $\T$. Even if $\T$ is not an SMT a suitable weight function in Definition~\ref{def2} will turn it into a WSMT. Conversely, any WSMT in the metric~(\ref{met}) is an ST (see~\cite{CNR} for details).

Given a set of terminals $V=\{V_1,\cdots,V_n\}$ and $w:V\to\R_+^*$, in order to find a WSMT that minimises~(\ref{met}) we resort to the following strategy: first compute the WMST of $V$ by an adaptation of Prim's algorithm and get an initial tree $\P$, then use Evolver to add and detach vertices from $\P$ according to Proposition~\ref{prop}. As we shall see in the next section this heuristic leads to a $\T$ that will not always be a WSMT. However, if the Gilbert-Pollak conjecture were also valid for weighted trees then a WSMT could be at most $1-\sqrt{3}/2\cong 13.4\%$ shorter than the WMST, for any set $V$, and then our heuristic would still give a tree close to the true WSMT in total length.

But we shall see in Sects.~\ref{meth} and~\ref{res} that the Steiner ratio $\sqrt{3}/2$ does not apply to our case. Anyway, whenever the actual WMST and the plane WMST are still close in length the proposed heuristic can be taken as a satisfactory approach to the actual WSMT. Otherwise we content ourselves with its purpose of reproducing a lab experiment.

\section{Background}
\label{backg}

Let $\C$ denote the standard complex plane $\R\times i\R$ with real and imaginary axes as vertical and horizontal, respectively. The points $\pm1$ and $i\sqrt{3}$ are the vertices of an equilateral triangle whose group of symmetries $\G$ is generated by reflection in $i\R$ and 120$^\circ$-rotation around $i/\sqrt{3}$. For $s\in(0,1/2)$ consider the group orbit $\G(1-2s)$. This orbit has six points indicated with bullets in any item of Fig.~\ref{comprs}.

Let us start with $s<2-\sqrt{3}$, for instance $s=1/8$. Now enumerate the elements of $\G(1-2s)$ as $V_k$, $1\le k\le6$, so that Fig.~\ref{comprs}(a) shows their corresponding MST with $V_3=1-2s$ counterclockwise. We call it $\P$ but for this set of terminals it is Fig.~\ref{comprs}(b) that depicts their SMT, which we call $\Q$.

Their total lengths are $4-2s$ and $2\sqrt{3}$, respectively. Hence $\P$ and $\Q$ are both SMTs for $s=2-\sqrt{3}$, illustrated in Figs.~\ref{comprs}(c,d). Now $\P$ remains the SMT until $\P\cup V_1V_6$ becomes a regular hexagon at $s=1/3$. For $s>1/3$ neither is an SMT but Fig.~\ref{comprs}(f) shows it in dotted line, and now its total length is $\sqrt{3}$. For $s>\sqrt{3}$ we can take the MST $(\P\setminus V_1V_2)\cup V_1V_6$, whose total length is $6-8s$ but just before $\Q$ collapses at $s=1/2$. 

\begin{figure}[ht!]
\center
\includegraphics[scale=.29]{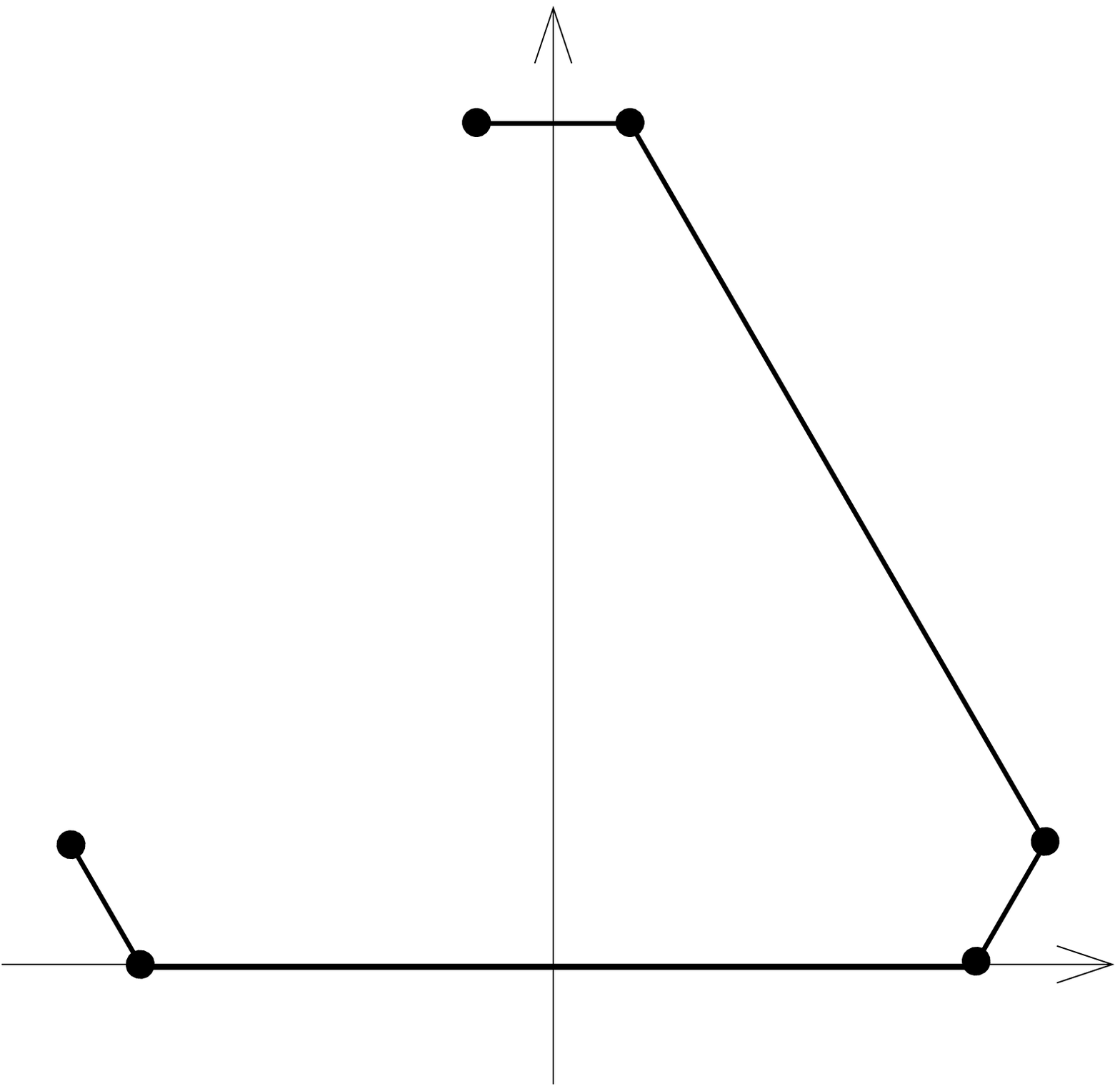}\hspace{1cm}
\includegraphics[scale=.29]{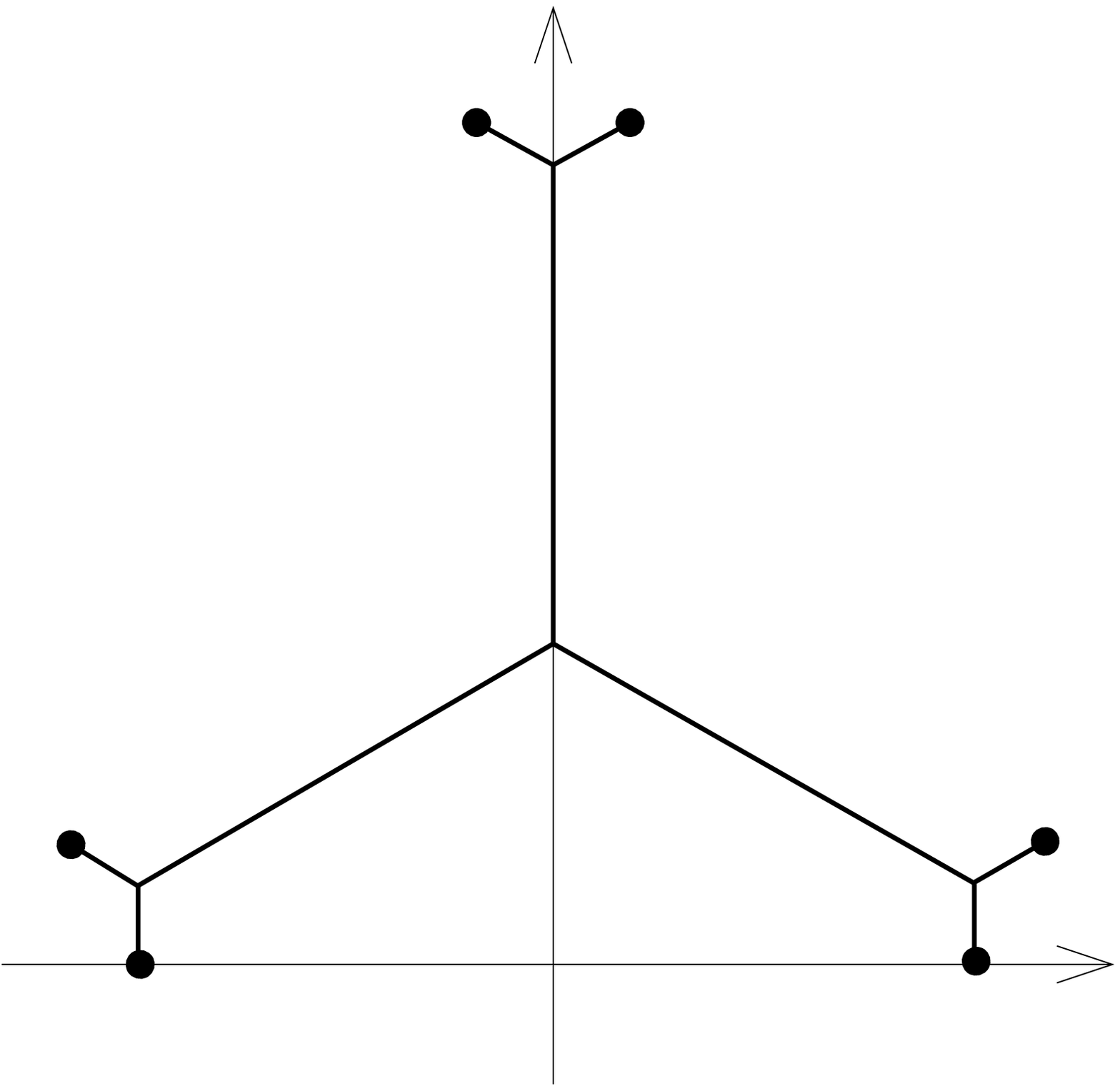}\\

(a)\hfil\hfil(b)\\

\includegraphics[scale=.29]{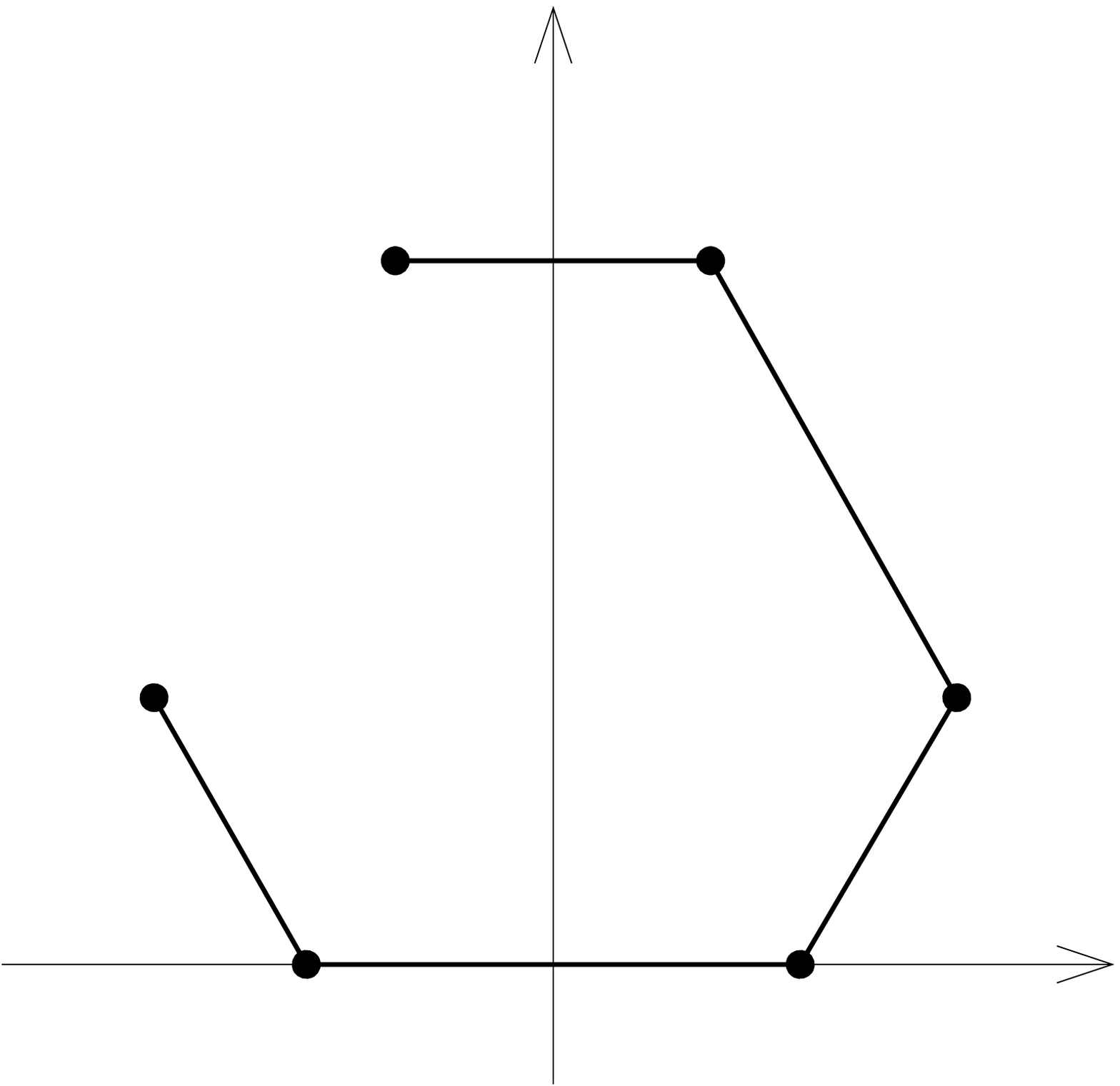}\hspace{1cm}
\includegraphics[scale=.29]{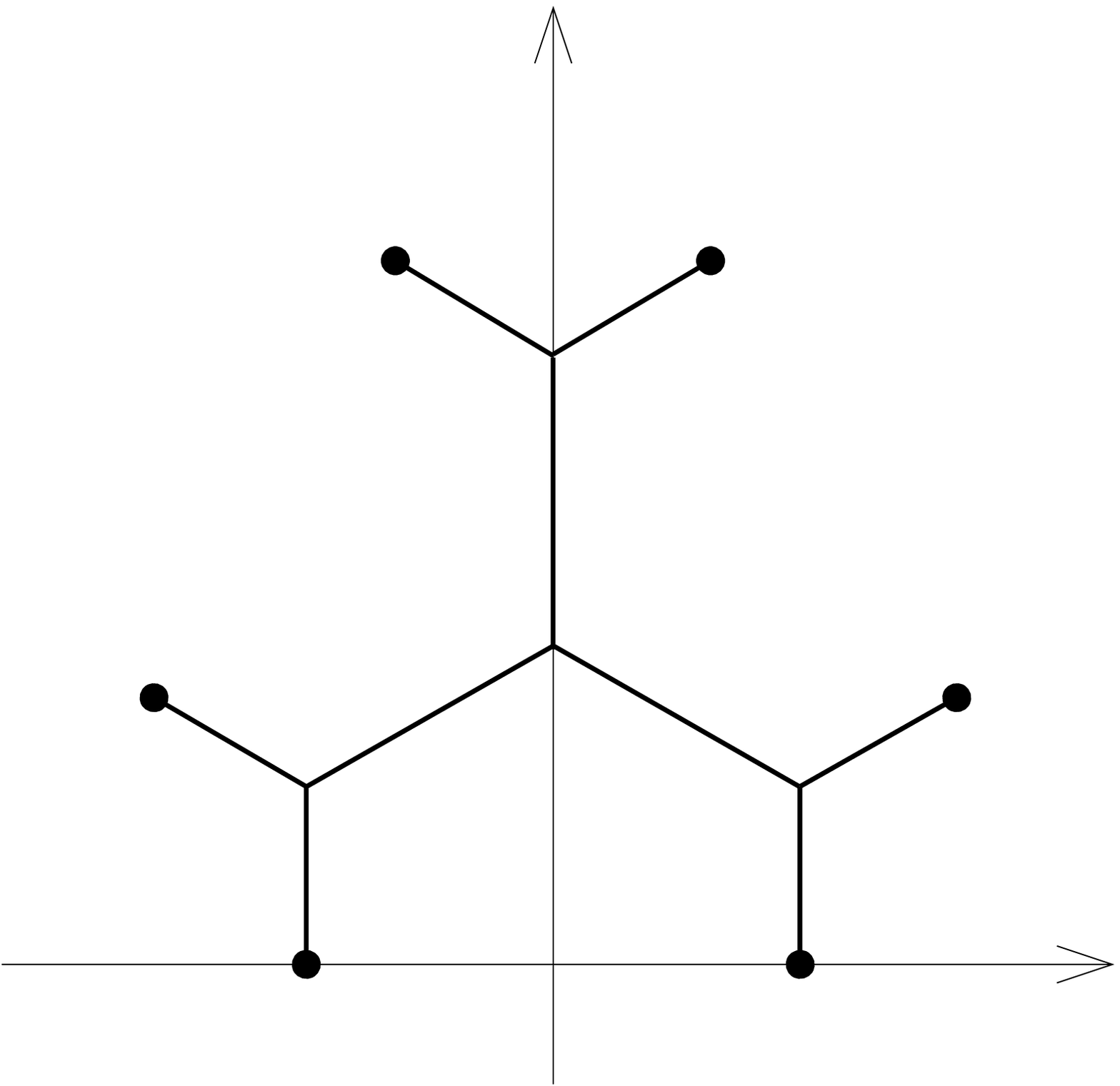}\\

(c)\hfil\hfil(d)\\

\includegraphics[scale=0.29]{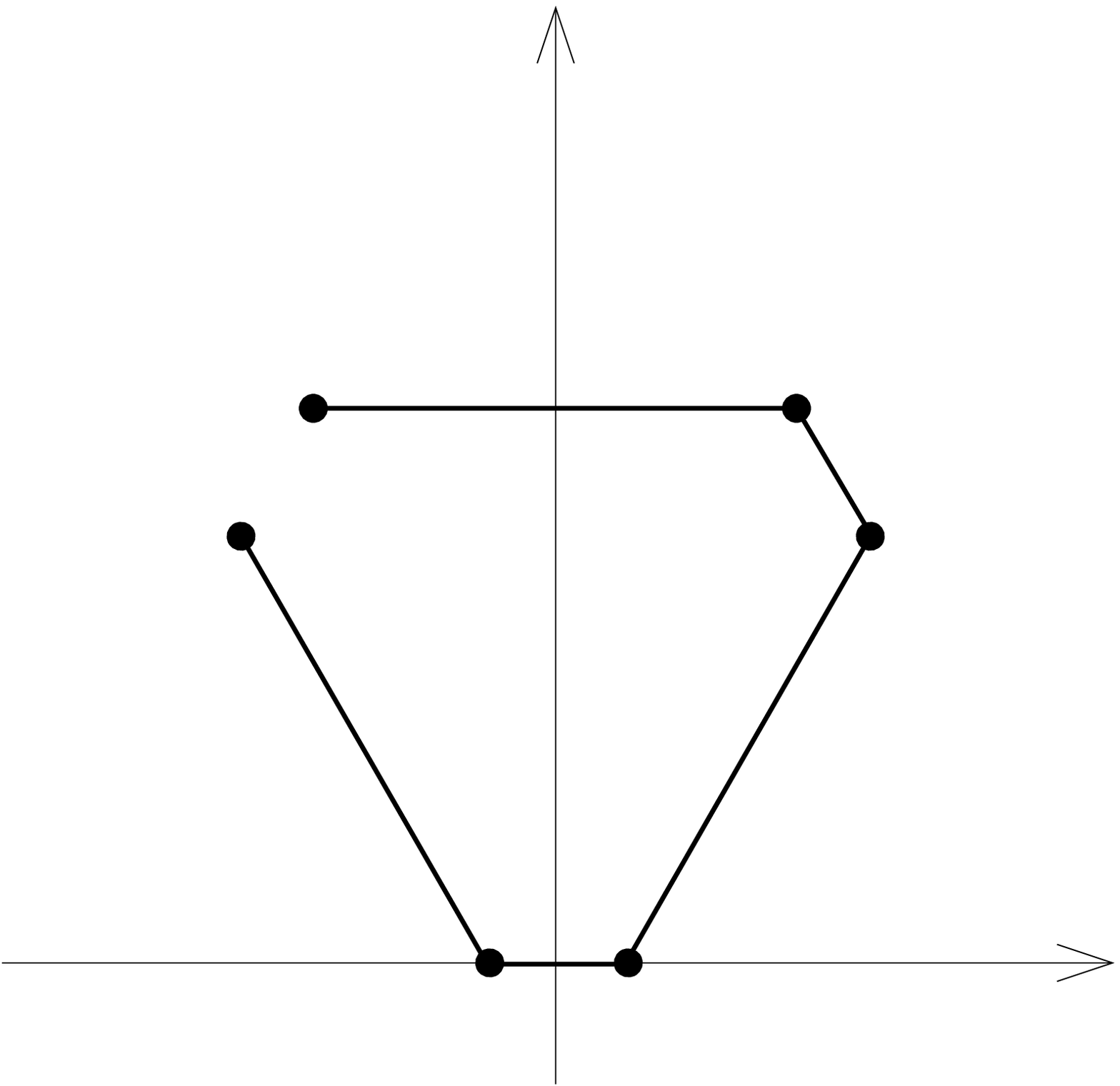}\hspace{1cm}
\includegraphics[scale=0.29]{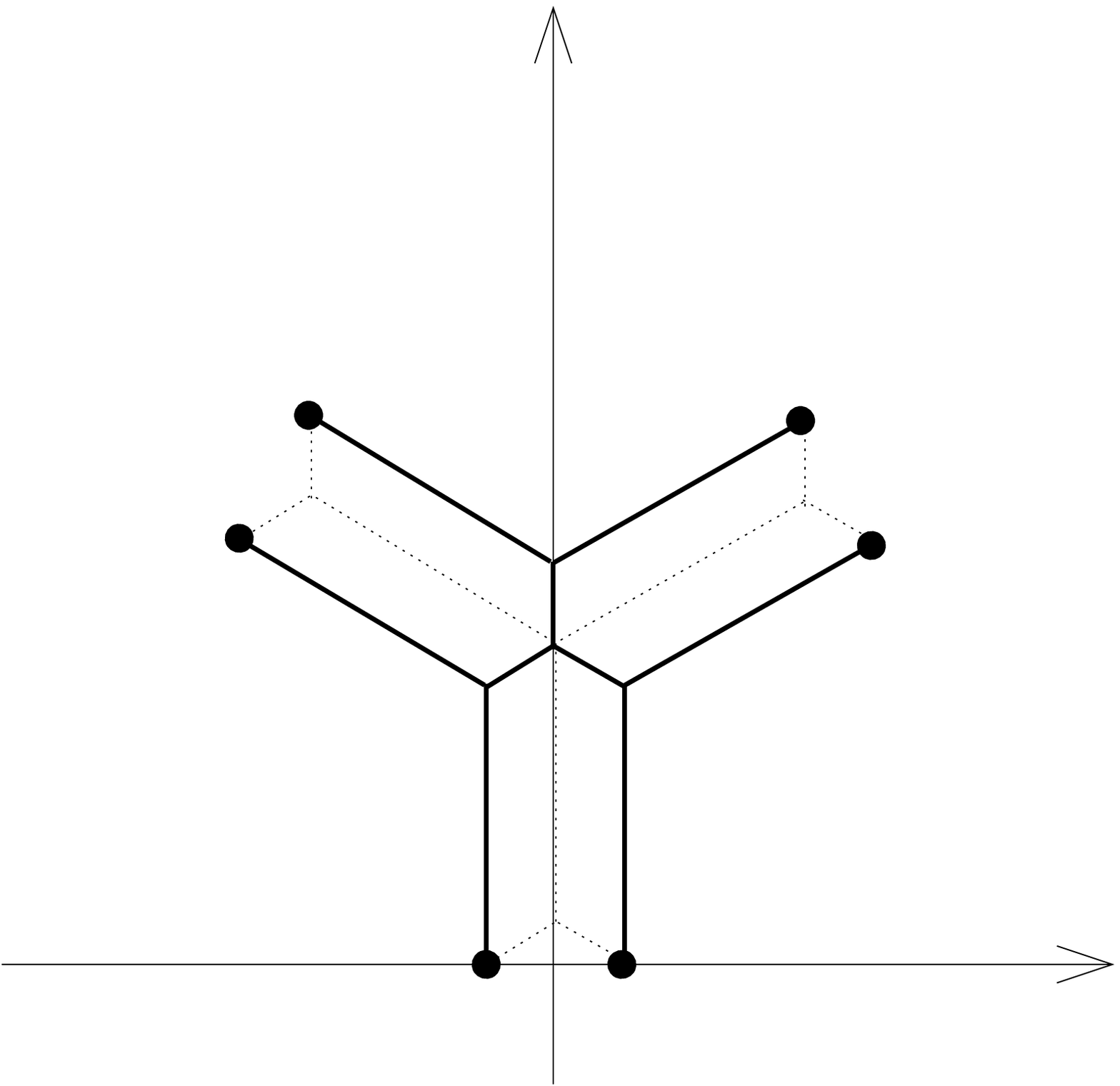}\\

(e)\hfil\hfil(f)\\

\caption{$\P$ (left) and $\Q$ (right) for increasing values of $s$.}
\label{comprs}
\end{figure}

We must also consider that the soap film will not always reach an equilibrium state by just detaching from an initial configuration. This is indeed the first step but afterwards some Steiner points can collide, so that the film will change its topology and then keep on moving towards an equilibrium state. Fig.~\ref{mov}(a) shows the WSMT for the vertices $A$, $B$, $C$, $D$ of a rectangle $\cR$ where $AB=CD=2$ and $BC=\ell$, $\frac{2}{\sqrt{3}}<\ell\le2$. The weight function is given by $w(A)=w(D)=\omega\ge7$ and $w(B)=w(C)=1$.

\begin{figure}[ht!]
\center
\includegraphics[scale=.51]{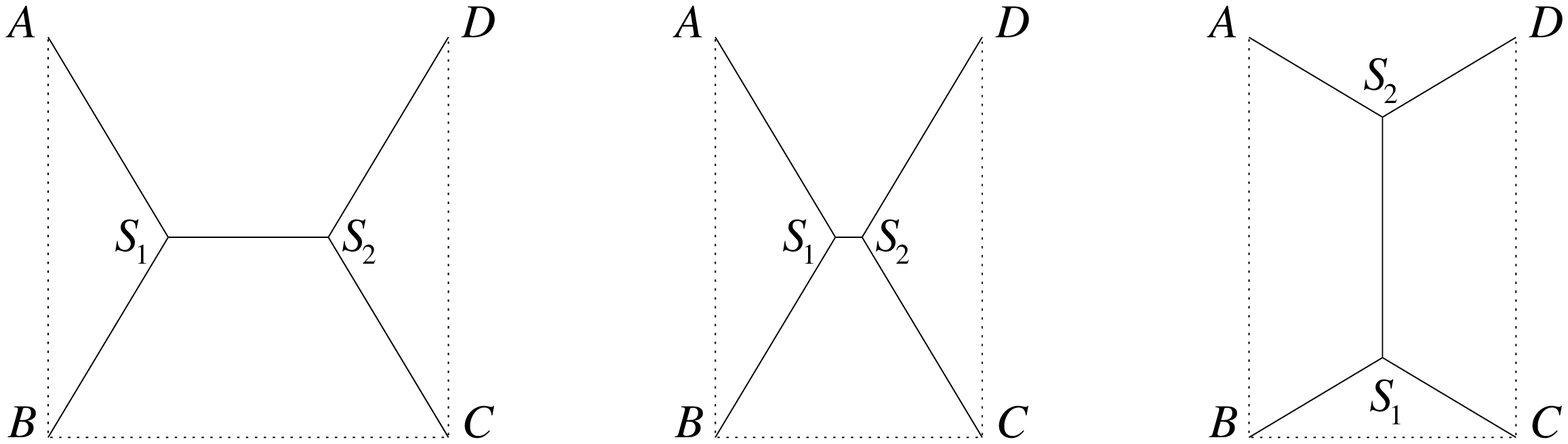}

(a)\hfil\hfil(b)\hfil\hfil(c)\\

\caption{STs for (a) $\ell=2$; (b) $\ell\gtrsim\frac{2}{\sqrt{3}}$ with initial topology; (c) same $\ell$ with another topology.}
\label{mov}
\end{figure}

Since $\ell>8/7$ the WMST is $\cR\setminus AD$, though either $\cR\setminus AB$ or $\cR\setminus CD$ stand for the MST. Hence the soap film detaches from $\cR\setminus AD$, as depicted in Fig.~\ref{mov}(b). But in our setting $w(S_1)=w(S_2)=1$, and therefore it is Fig.~\ref{mov}(c) that shows the WSMT, which is also the SMT.

Of course, the soap film in Fig.~\ref{mov}(b) will change to the one in Fig.~\ref{mov}(c) if we softly blow in the direction of $S_1S_2$. Our algorithm considers theses changes under the following:

\begin{asspt}
Let $\T=(V',E')$ be the detachment from $G=(V,E)$, in which there exists $S_1,S_2\in V'\setminus V$, $S_1S_2\in E'$ with $\frac{S_1S_2}{PS_j}<\sqrt{3}-1$ for at least three segments $PS_j\in E',\,j=1,2$. Then $\T$ is not the WSMT and so its topology must be changed in order to reduce $||\T||$.
\label{asspt1}
\end{asspt}

The upper bound $\sqrt{3}-1$ is attained in Fig.~\ref{mov}(a). Notice the similarity between Result~\cite[\S8.4]{GP} and Assumption~\ref{asspt1}, and also that we have a sufficient but not necessary condition. As an example, $\T$ will cease to be an SMT in Fig.~\ref{comprs}(d) for $s\gtrsim2-\sqrt{3}$ but there we have $\frac{S_1S_2}{PS_j}\ge1$ until $s=\frac{1}{2}-\frac{\sqrt{3}}{9}$.

Evolver seeks for minimisation, hence as a first step we are going to have $S_1=S_2$ in Fig.~\ref{mov}(b) for any $\ell\le\frac{2}{\sqrt{3}}$. But the case $S_1S_2=0$ is comprised by Assumption~\ref{asspt1} and therefore we do not need to implement changes of topology by collision of Steiner points separately.

In Assumption~\ref{asspt1} we stated ``at least three segments'' because of the following example: suppose $A$ were much closer to $S_1$ in Fig.~\ref{mov}(b) so that $\frac{S_1S_2}{AS_1}\ge\sqrt{3}-1$ by keeping the same topology. Now softly blowing $S_2$ against $S_1$ will make this latter collide with $A$, and the new topology is depicted in Fig.~\ref{moving}(a). According to our convention $w(A)$ will take the same value as $w(S_1)$, and therefore we reduce $||\T||$. 

\begin{figure}[ht!]
\center
\includegraphics[scale=.53]{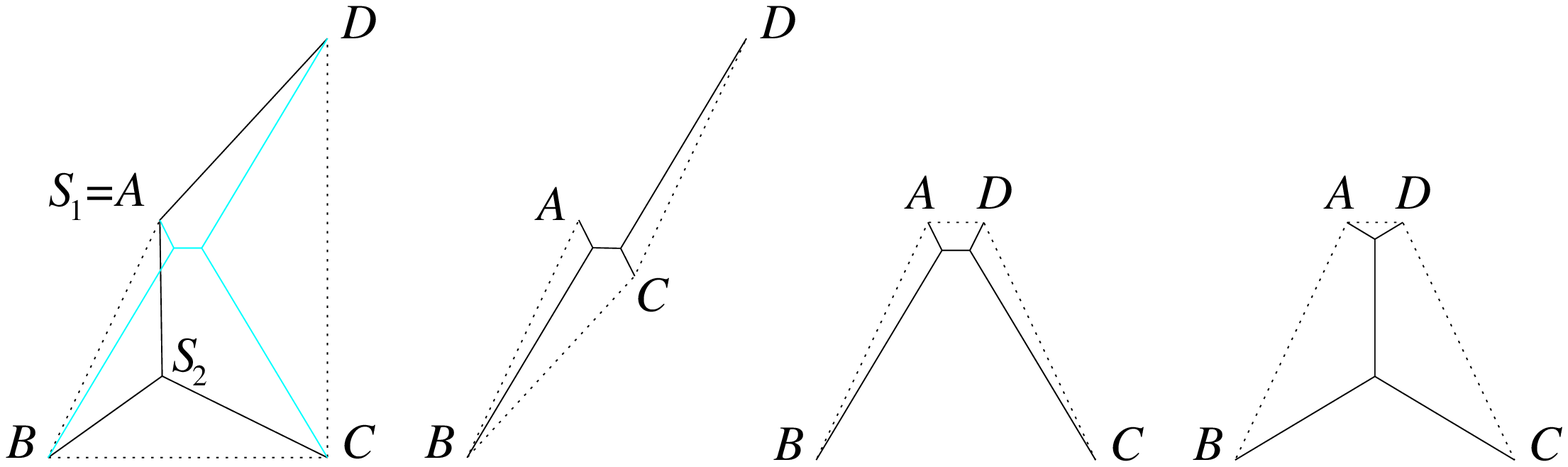}

(a)\hfil\hfil(b)\hfil\hfil(c)\hfil\hfil(d)\\

\caption{Towards the WSMT with (a) script-intervention; (b) {\tt g}-automatic; (c,d) without intervention.}
\label{moving}
\end{figure}

This would not work if we had stated ``at least two segments'', as illustrated in Fig.~\ref{moving}(b). There the segment $S_1S_2$ is tilted clockwise by an imperceptible angle of $1.3^\circ$, which Evolver automatically does with its {\tt g}-command (see \cite{B} for details). A possible improvement in Assumption~\ref{asspt1} could be about the position of these two segments: take $\ell=\delta+2/\sqrt{3}$, $A$, $D$ sufficiently close to $S_1$, $S_2$ and $\delta\cong0$. Then Figs.~\ref{moving}(c,d) depict the WMST in dotted line, so that $||\T||$ can be reduced by the same ``blowing'' if we use the given weights. However, with $\omega=7$ consider that $w(B)$ and $w(C)$ change to $8$ and $6$, respectively. Then first $S_1$ detaches with weight $7$, then afterwards $S_2$ with weight $6$. In this case Fig.~\ref{moving}(d) still leads to a reduction in $||\T||$ but not as expressive as for $w(S_2)=w(S_1)=6$. 

Namely, the sequence of detachment should be swapped, which in practice changes the way the plates are taken out from the soapy water. However, it is pointless to strengthen Assumption~\ref{asspt1} since we have no formal proof of its formulation yet. Therefore, in this present version we have decided to skip this improvement, which searches for a WSMT, and content ourselves with reproducing the physical experiment with less intervention of blowing to reduce the total length of the soap film.

The next two sections explain that we always work with plane graphs. The reader could be curious about the algorithm giving a WSMT with intersections, so that they should be further treated. For instance, suppose we could attribute weights to the bullets in Fig.~\ref{nev} such that the resulting WMST is marked there in black with the detachments in blue. 

\begin{figure}[ht!]
\center
\includegraphics[scale=.25]{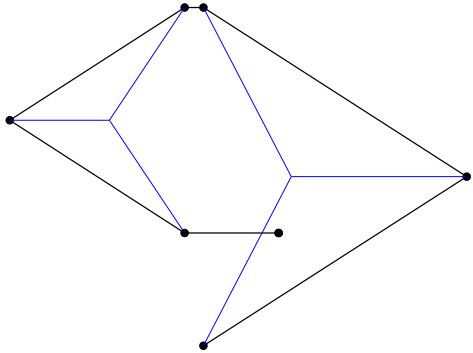}

\caption{Two mirrored equilateral triangles with resizing and slightly apart.}
\label{nev}
\end{figure}

However, according to our tests apparently there is no function $w:V\to\R_+^*$ in Definition~\ref{def1} that produces a WMST as in Fig.~\ref{nev}. Furthermore, in Sect.~\ref{res} we discuss several tests with our algorithm, for arbitrary random number and position of vertices with random $w$, and curiously no intersections ever occur to the heuristic WSMT. Therefore we consider:

\begin{asspt}
The heuristic WSMT obtained through our method is always plane.
\label{asspt2}
\end{asspt}

Both Assumptions~\ref{asspt1} and~\ref{asspt2} still lack a formal proof, though they have always shown to be true in our simulations.

\section{Limitations of the Heuristic}
\label{limh}

Our algorithm aims at reproducing a physical experiment and also direct it in order to find the WSMT for certain weights attributed to a set of vertices. In Sect.~\ref{meth} we shall explain how the user can enter points according to a sequence that will determine the way the plates are removed from the soap solution. This can also be defined by weights but one can give priority to finding the WSMT, which is the main purpose of our algorithm. The sought after soap film with the same configuration will in general need some intervention as explained in Sects.~\ref{prelim} and~\ref{backg}.

There we saw that $\P\cup V_1V_6$ is a regular hexagon for $s=1/3$, and $\P$ is also the SMT with $|\P|=10/3$. Small variations in $V$ can drastically change the topology of the SMT. For instance, if $V_1$ is rotated inwards by $10^\circ$ we get $\P$ and $\Q$ as depicted in Figs.~\ref{perturb}(a,b), respectively. There we have $|\P|\cong3.325$ and $|\Q|\cong3.399$, so that $\P$ is still the SMT. Our method is based on detachment of Steiner points from an initial configuration, and in this example we see that it works. But here of course the heuristic fails for $s\cong0$.

\bigskip

\begin{figure}[!htb]
\centering
\begin{minipage}{.5\textwidth}
\flushleft
\includegraphics[scale=0.31]{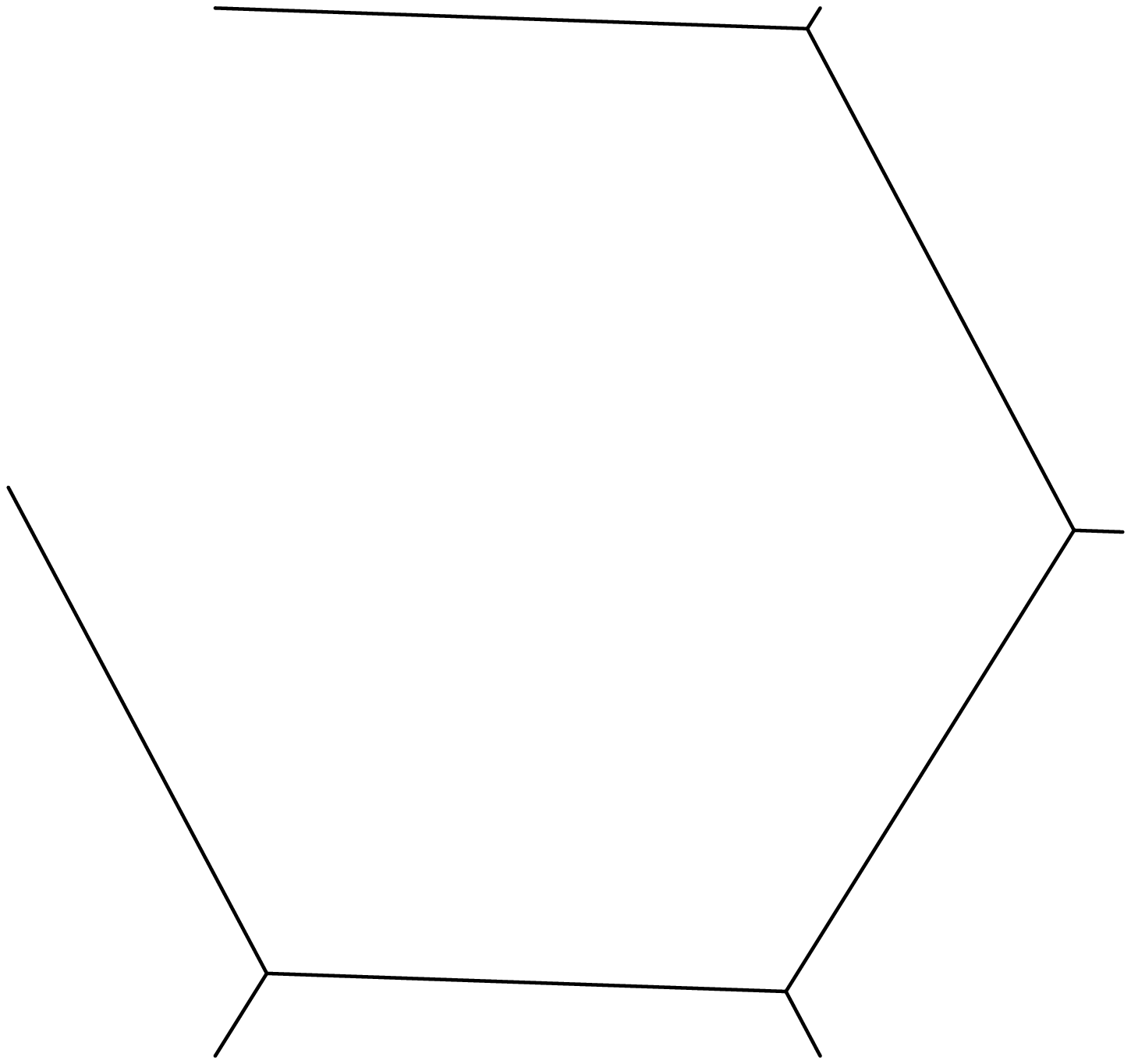}
\end{minipage}%
\begin{minipage}{.5\textwidth}
\flushright
\includegraphics[scale=0.31]{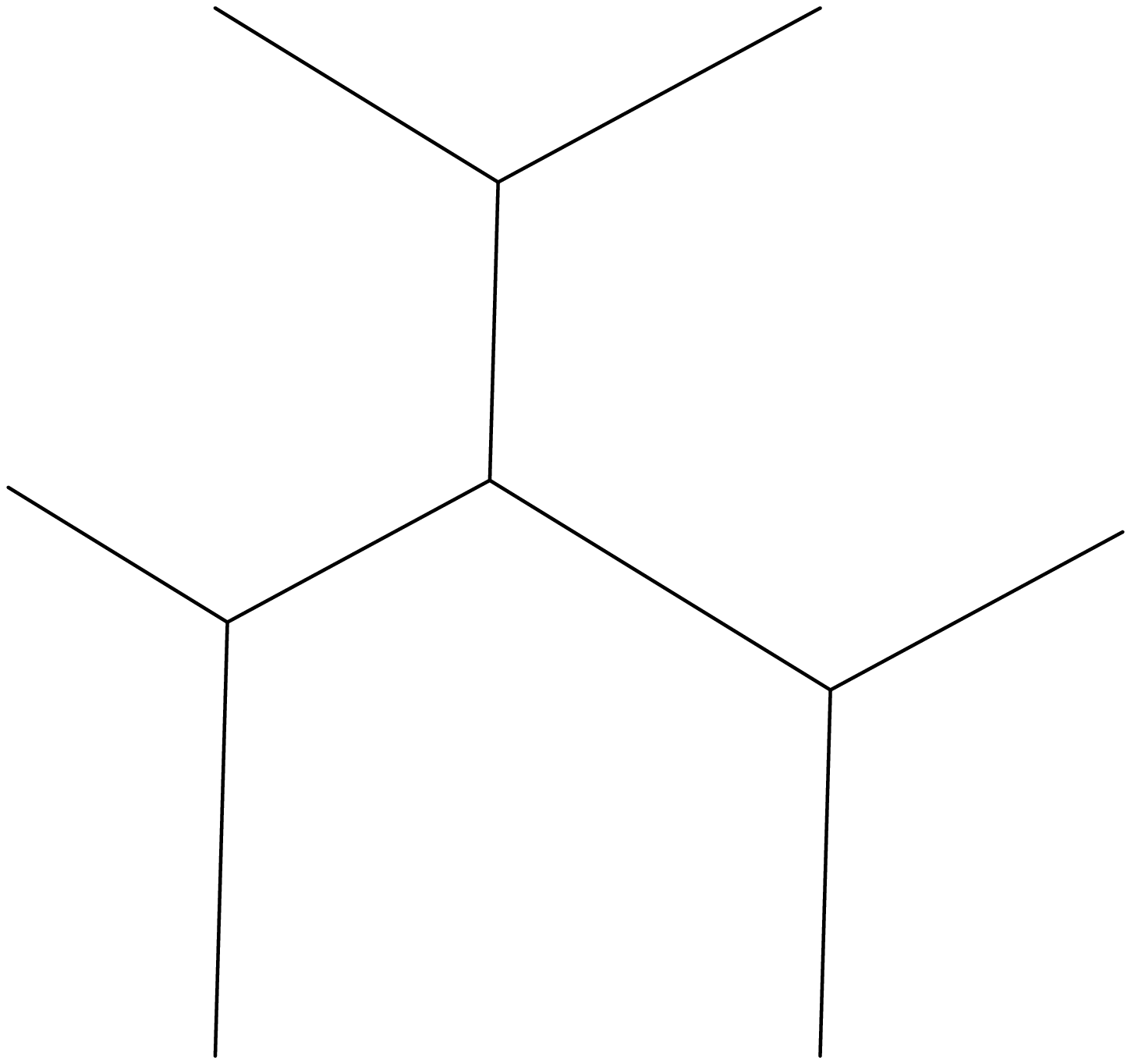}
\end{minipage}

\bigskip

(a)\hfil\hfil(b)\\

\caption{ST after rotating $V_1$ inwards by $10^\circ$ at $s=1/3$ for (a) $\P$ and (b) $\Q$.}
\label{perturb}
\end{figure}

Another limitation is that our algorithm is sequential, whereas slidings and detachments as in Figs.~\ref{sliding} and~\ref{mar}(b) all occur simultaneously, and also together with the action of taking out the plates from the soapy water. In particular, when this is done two film strips never cross at any moment.

Our program uses a preprocessing that gives a {\it plane} WMST, while the real WMST does allow crossings as we shall see in Sect.\ref{meth}. By turning such crossings into Steiner points the resulting heuristical WSMT would surely be shorter than the one obtained from the plane WMST. But according to our analysis the cost of implementing this approach is too high for little benefit, even if considering that the Gilbert-Pollak conjecture does not hold for the weighted case.

\section{Method}
\label{meth}

Now we explain our method with an example. The first part consists in giving a set of points with coordinates $(x,y)$ inside the square $[0,100]^2$, and each will have an integer weight $w\in[1,77]$. Hence the user must re-scale points to that range and render them either graphically or by means of a datafile. After invoking Octave type {\tt wmst} at the prompt and you will get the message
\\

{\tt Please adjust terminal window to show full picture.}

{\tt Give a filename to open or press Enter to choose points.}
\ \\

A graphical window appears by pressing the Enter key, and wherever you hover the mouse cursor a coordinate pair $(x,y)$ can be marked by hitting any character key. The corresponding weight is given by its ASCII number {\it minus} 48. Then the character keys {\tt 1} to {\tt 9} really mean these numbers as weights, whereas other keys will give weights up to 77 (for the character \#125 = {\tt\}}). Either mouse button or characters under \#49 are all set to $w=1$, and non-integer weights must be edited in datafile. Namely, you can save your weighted points and change them manually afterwards. Please see \url{https://theasciicode.com.ar} for the ASCII numbers.

The Enter key terminates your graphical input and readily gives the plane WMST, though the actual tree is in general non-embedded. Fig~\ref{wmst}(a) shows the set of terminals in {\tt sample.txt}, which you can also input to our program (without the extension {\tt txt}). The total weighted length is $5199.7$ for that WMST. Intuitively speaking, Steiner points should replace intersections but these can happen in various ways: multiple at different and same points, between a vertex and an inner point of an edge, etc. Treating all these cases geometrically will increase computational cost for little use because any heuristic already has limitations. In our present version we chose to get $\P$, the plane WMST, by allowing a greater connection cost. Namely, our adaptation of Prim's algorithm looks for edges of lowest cost that do not intersect the already existing ones. In our example Fig.~\ref{wmst}(b) shows $\P$ with $||\P||=5498.4$ and total Euclidean length $|\P|=589.2$ (if we change the weight function to $w\equiv 1$). There some angles are marked in red and magenta for the longer and shorter sides, respectively. These will start the sliding process depicted in Fig~\ref{sliding}.

\begin{figure}[!htb]
\centering
\begin{minipage}{.5\textwidth}
\flushleft
\includegraphics[scale=0.51]{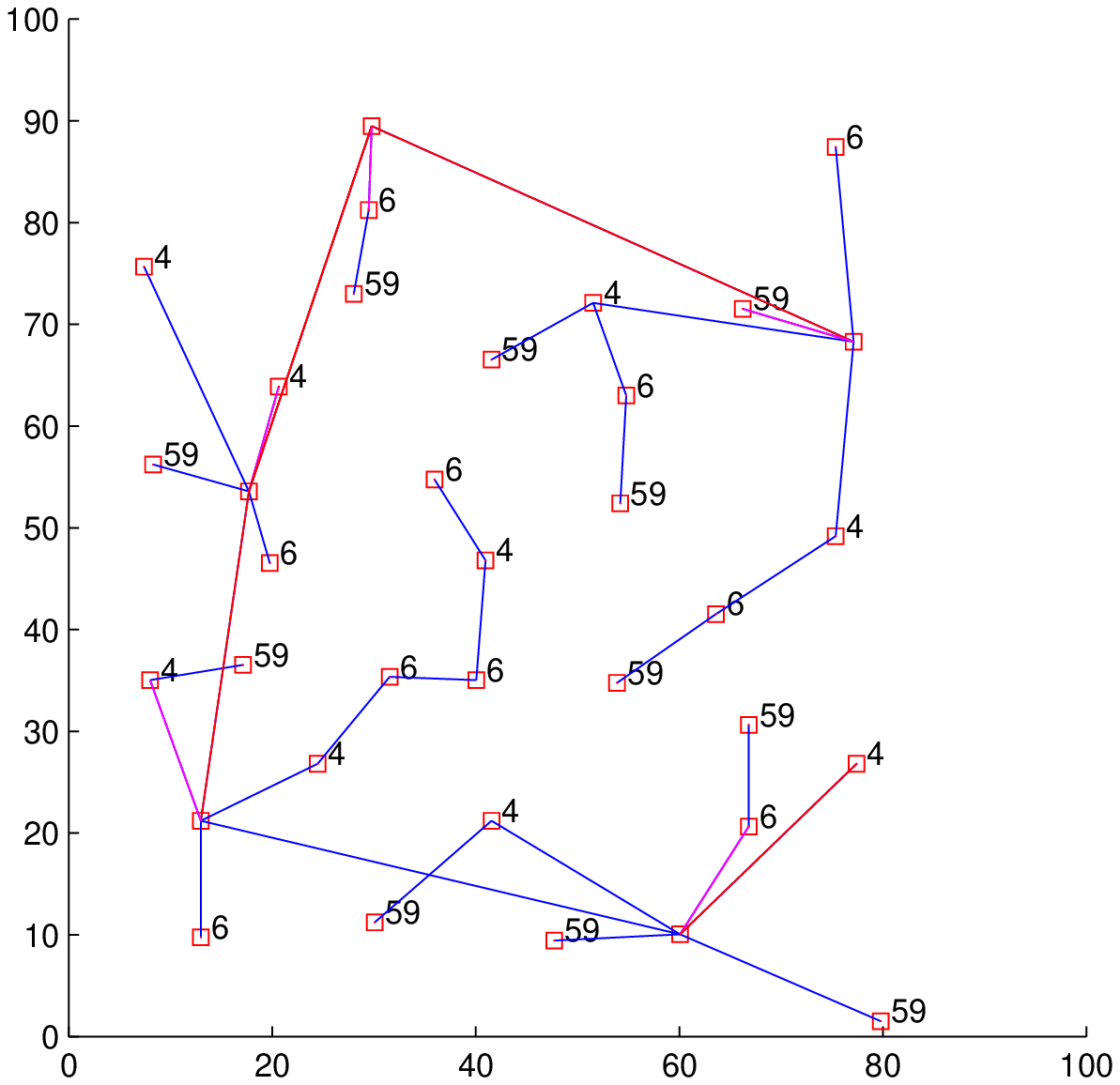}
\end{minipage}%
\begin{minipage}{.5\textwidth}
\flushright
\includegraphics[scale=0.71]{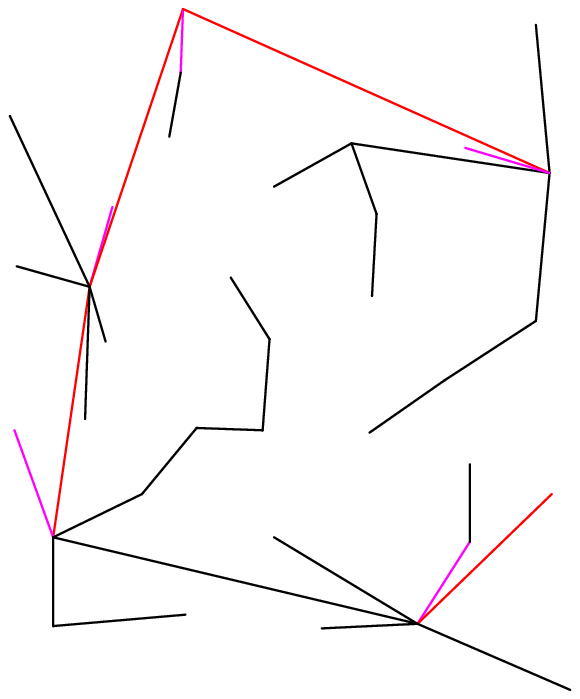}
\end{minipage}

(a)\hfil\hfil(b)\\

\caption{Terminal points and (a) the non-embedded WMST; (b) the plane WMST.}
\label{wmst}
\end{figure}

The Octave program {\tt wmst} will store the plane WMST in the Evolver file {\tt st.fe}, and now we must invoke it at the shell prompt by typing {\tt evolver h}, where {\tt h} stands for {\it heuristic}. Its first step is depicted in Fig~\ref{wmst}(b), and from that point on each vertex $B$ with acutest angle $A\widehat{B}C$ is checked to see whether $\max\{B\widehat{C}A,C\widehat{A}B\}\ge120^\circ$. If so, then either $A$ or $C$ is an intrinsic Steiner point, and the red side is replaced with the missing edge of $\Delta ABC$. This process is repeated until exhausting the number of intrinsic Steiner points. Figs.~\ref{sl_bg_end}(a,b) show the first and last iterations, respectively. 

\begin{figure}[!htb]
\centering
\begin{minipage}{.5\textwidth}
\flushleft
\includegraphics[scale=0.37]{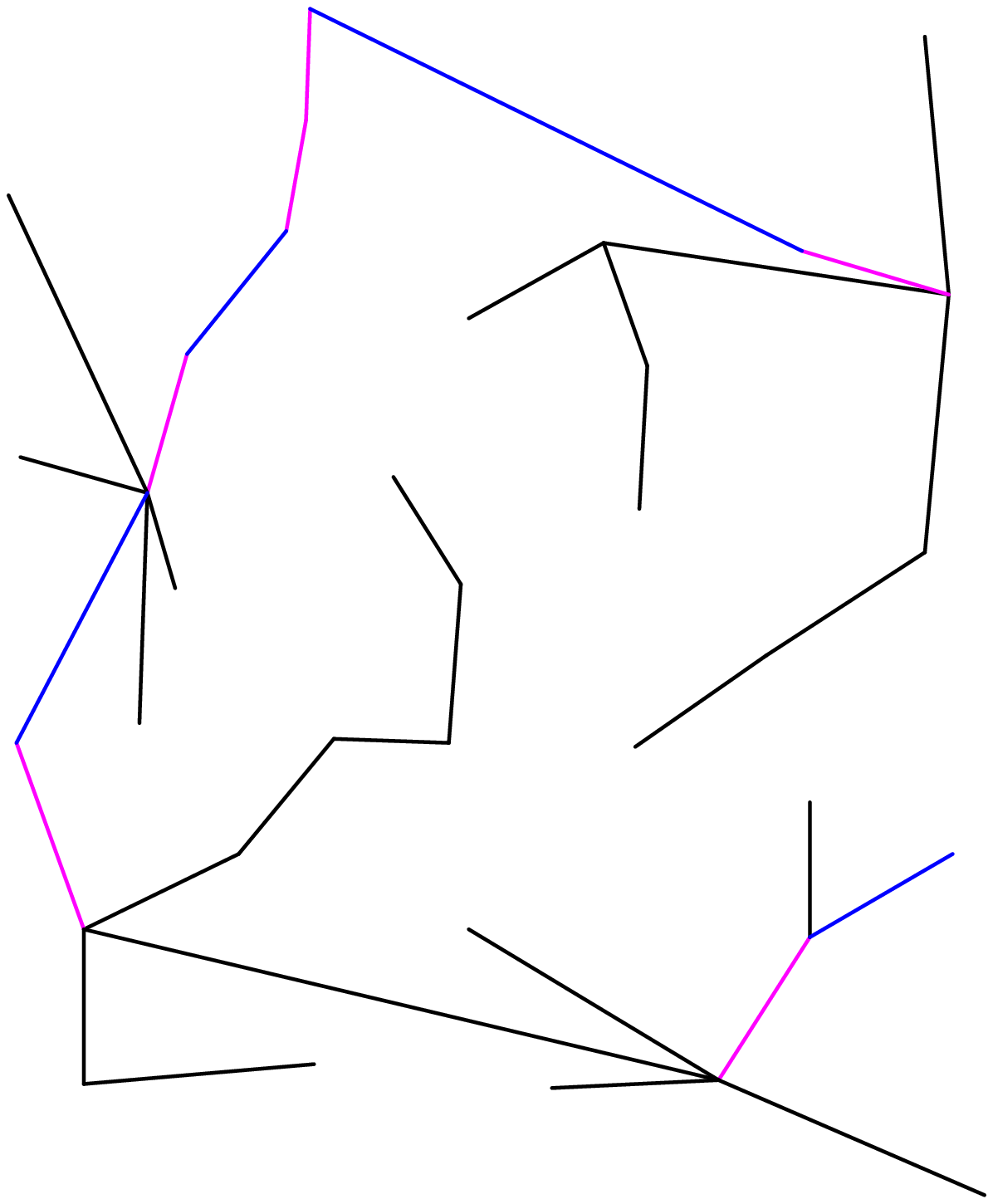}
\end{minipage}%
\begin{minipage}{.5\textwidth}
\flushright
\includegraphics[scale=0.37]{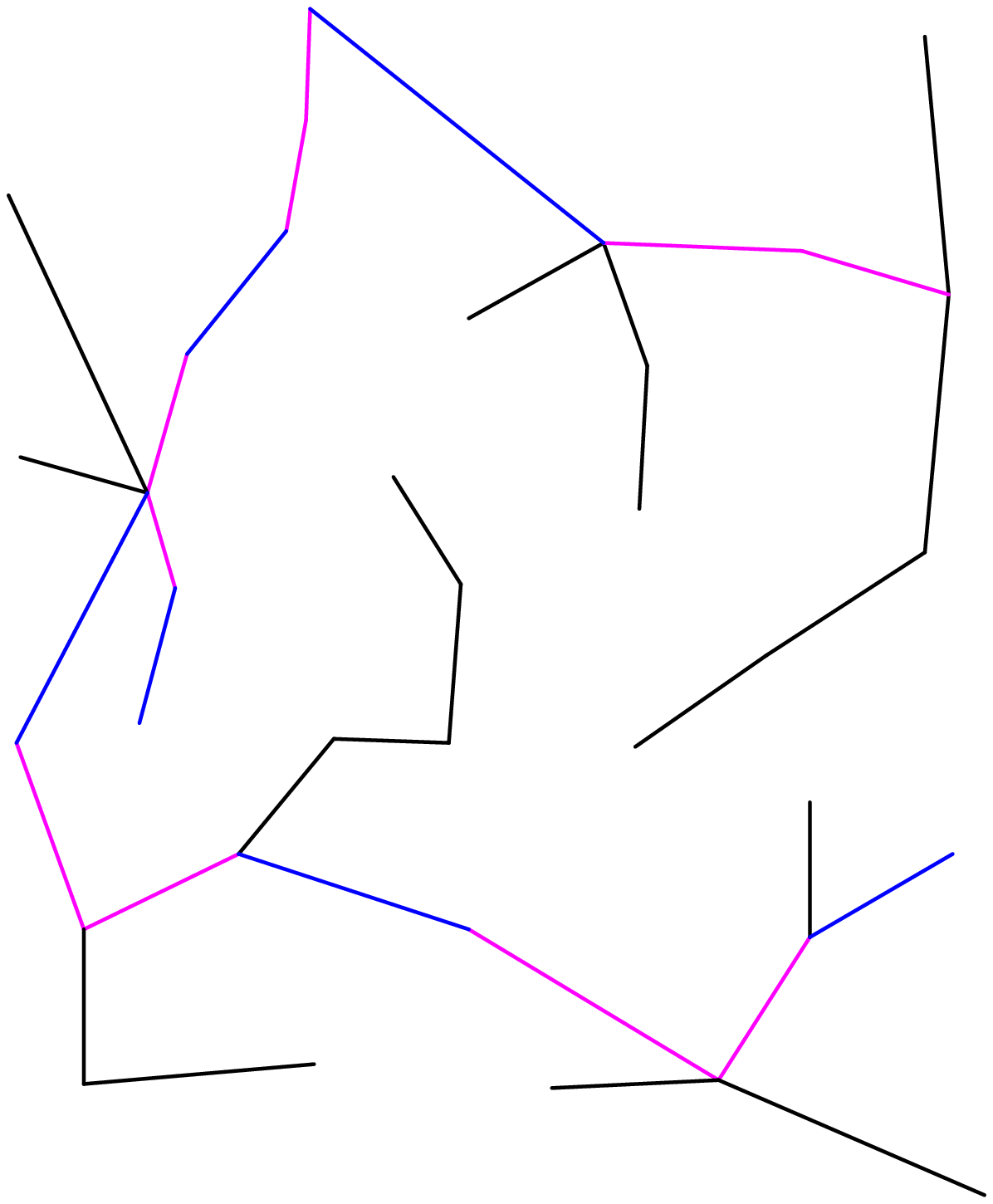}
\end{minipage}

(a)\hfil\hfil(b)\\

\caption{Sliding process at (a) the beginning; (b) the conclusion.}
\label{sl_bg_end}
\end{figure}

In this example the sliding has not started from the most acute angle in Fig.~\ref{wmst}(b). As explained in Sect.~\ref{limh} the soap film configuration will always depend on the way the plates are removed from the soapy water. Actually vertices are checked in the order they were entered by the user.

The next step is to detach Steiner points according to Sect.~\ref{prelim}. In practice this happens simultaneously with the sliding process but our approach is sequential. Fig.~\ref{detach}(a) show the first iteration of detachment, and there we left the replaced edges in cyan for visualisation. The last iteration is shown in Fig.~\ref{detach}(b), where we marked in red the segments $S_1S_2\in E'$ to see whether they satisfy the hypothesis of Assumption~\ref{asspt1}. Finally in Fig.~\ref{finalz} we see the last steps to get the soap film.

\begin{figure}[!htb]
\centering
\begin{minipage}{.5\textwidth}
\flushleft
\includegraphics[scale=0.37]{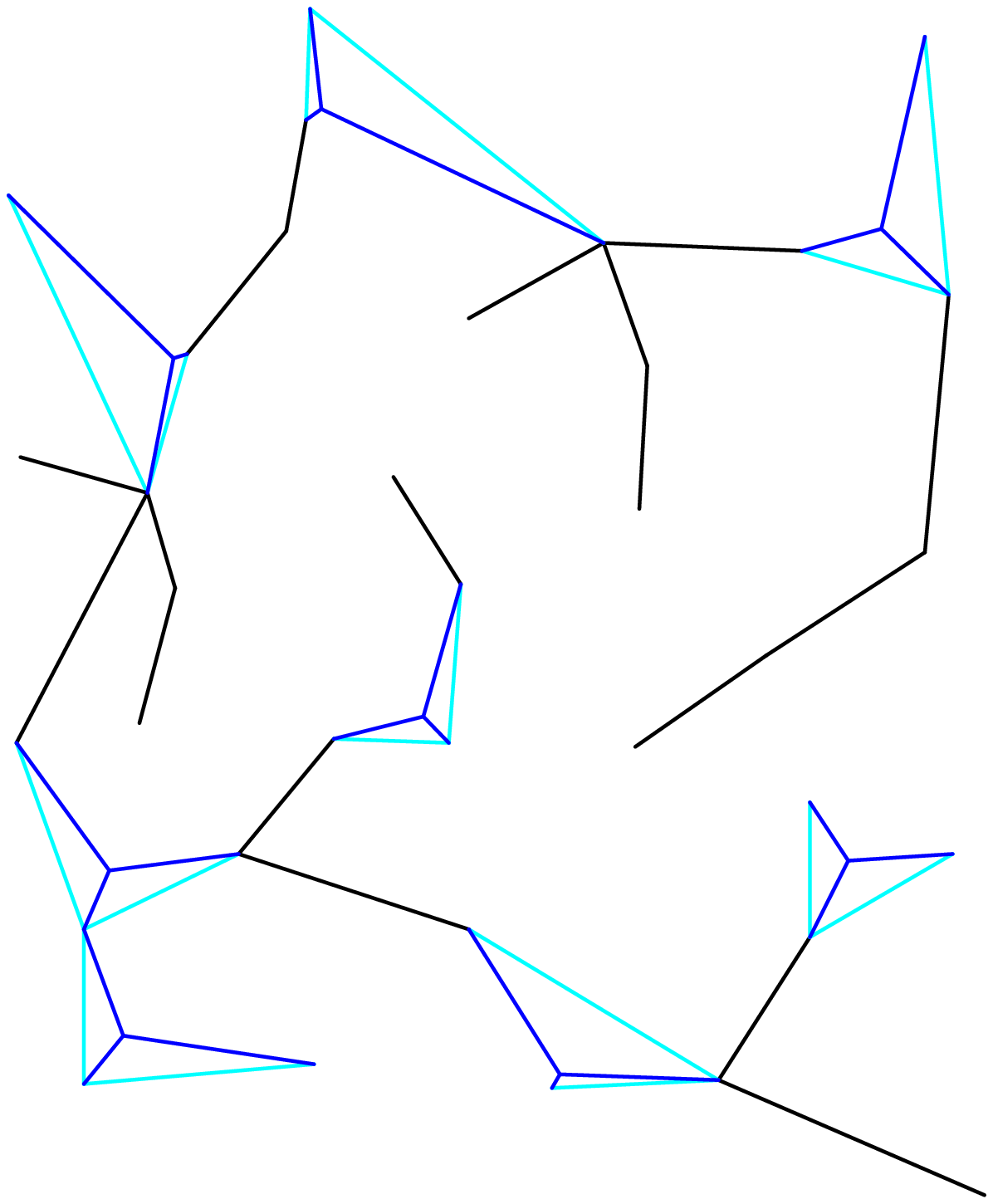}
\end{minipage}%
\begin{minipage}{.5\textwidth}
\flushright
\includegraphics[scale=0.37]{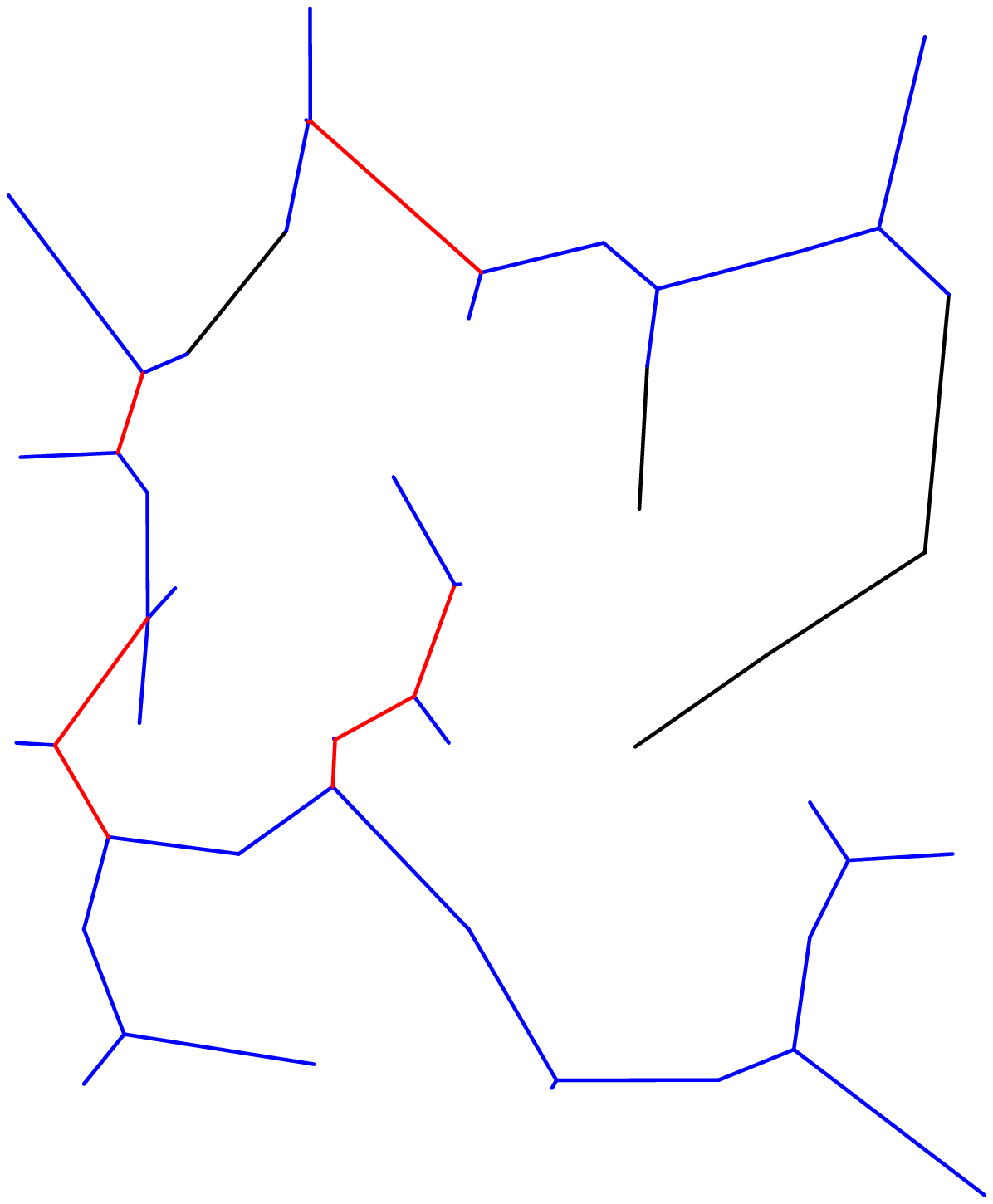}
\end{minipage}

(a)\hfil\hfil(b)\\

\caption{Detachment from Steiner points at (a) the beginning; (b) the conclusion.}
\label{detach}
\end{figure}

Fig.~\ref{finalz}(b) shows $\T$, our heuristic WSMT, in black and also $\P$ in cyan for comparisons. The program gives $||\T||=3065.6$ and $|\T|=406.6$, with ratios $||\T||/||\P||=0.56$ and $|\T|/|\P|=0.69$, namely both below $\sqrt{3}/2$.

\begin{figure}[!htb]
\centering
\begin{minipage}{.5\textwidth}
\flushleft
\includegraphics[scale=0.37]{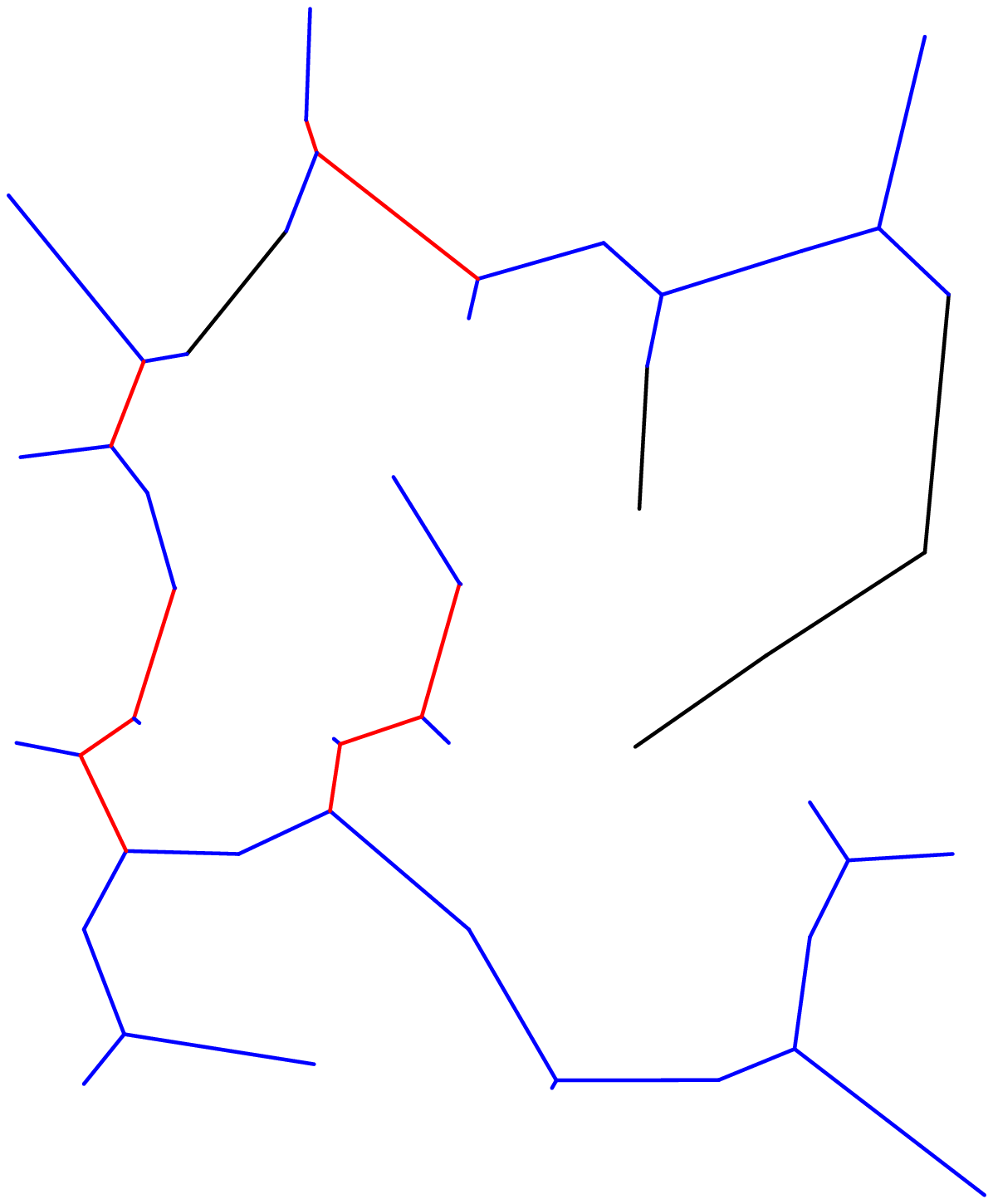}
\end{minipage}%
\begin{minipage}{.5\textwidth}
\flushright
\includegraphics[scale=0.37]{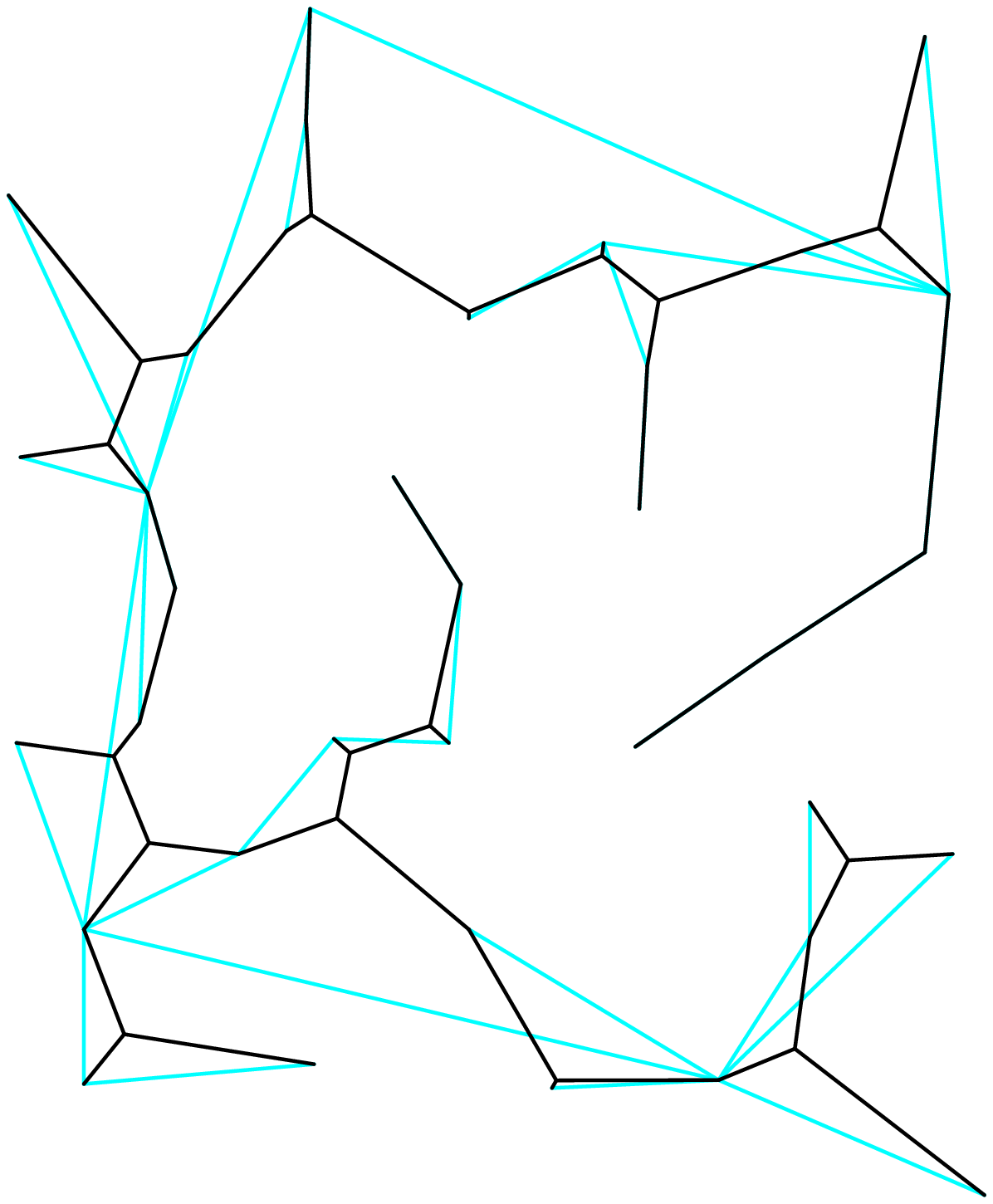}
\end{minipage}

(a)\hfil\hfil(b)\\

\caption{The soap film (a) after applying Assumption~\ref{asspt1}; (b) compared with the WMST.}
\label{finalz}
\end{figure}

However, Fig.~\ref{finalz}(b) illustrates our program when it stops at a tolerance of 2.2\% regarding the theoretical 120$^\circ$. Namely, in an ideal ST adjacent edges must always make an angle of at least 120$^\circ$ but the closer we want to approach it the longer becomes the computational time. Fig.~\ref{jump}(a) shows that there is little change when we set the tolerance to 1.1\% but now $||\T||$ jumps to 3775.8 (all other values remain the same).

\begin{figure}[!htb]
\centering
\begin{minipage}{.5\textwidth}
\flushleft
\includegraphics[scale=0.37]{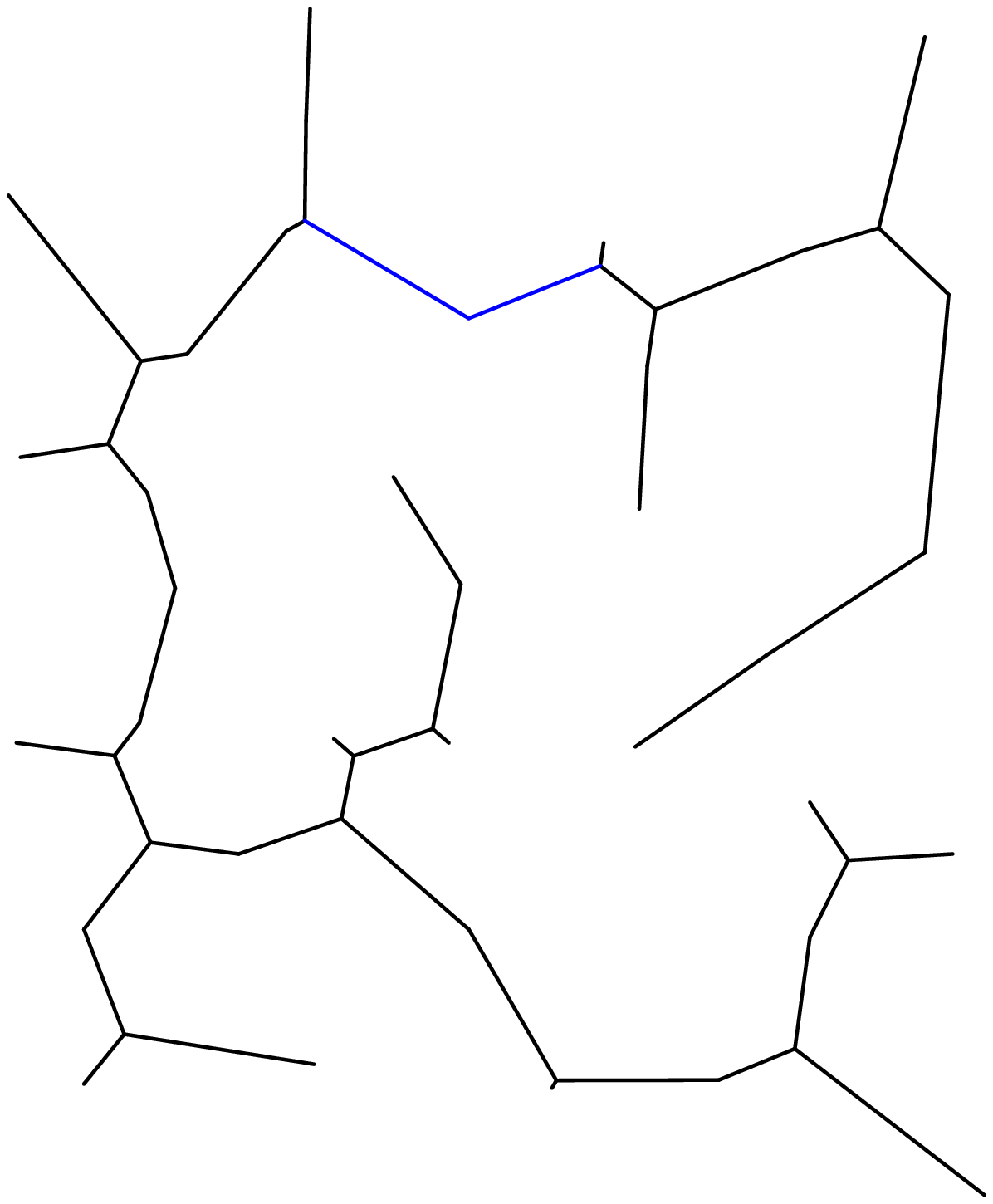}
\end{minipage}%
\begin{minipage}{.5\textwidth}
\flushright
\includegraphics[scale=1.00]{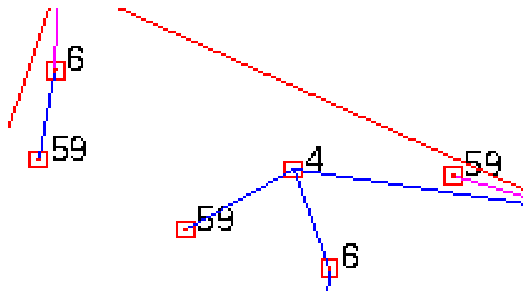}
\end{minipage}

(a)\hfil\hfil(b)\\

\caption{(a) 120$^\circ$ angles now with 1.1\% error margin; (b) zoomed detail of the initial WMST.}
\label{jump}
\end{figure}

The reason for this jump is explained by the blue segments in Fig.~\ref{jump}(a) and part of Fig.~\ref{wmst}(a) zoomed in Fig.~\ref{jump}(b). One of the Steiner points in Fig.~\ref{finalz}(b) now collided with a terminal that weighs 59, hence what was a local weighted connection of approximately $14.2(6+4)/2+10.5(4+4)/2=113$ now becomes $14.2(6+59)/2+10.5(59+4)/2=792.25$. Their difference of 679.25 is 95.7\% the actual rise of 710.2 in $||\T||$ (do not forget that nearby segments also changed and one Steiner point collapsed to a terminal). Notice that this rise is a significant 23.2\%.

The next section is devoted to some tests and comparisons performed with our algorithm.

\section{Results}
\label{res}

As already explained in Sect.~\ref{meth} our preprocessing with MATLAB/Octave obtains the plane WMST for a given set of vertices, and we gave an example in Fig.~\ref{wmst}. Indeed, if we improve our heuristic by working with the actual WMST this will increase the complexity of the algorithm for very little gain.

But Sect.~\ref{backg} shows an example for which intersections could happen in the heuristic WSMT, so that a postprocessing would then be necessary. However, in Computational Geometry we can perform tests in order to introduce assumptions under which the algorithm is implemented. This is very important because otherwise we take the risk of elaborating additional source code in vain.

As commented on Fig.~\ref{nev} our tests have never shown a WMST as depicted there, and some of the outputs are shown in Fig.~\ref{sixbirds}.

\begin{figure}[ht!]
\center
\includegraphics[scale=.22]{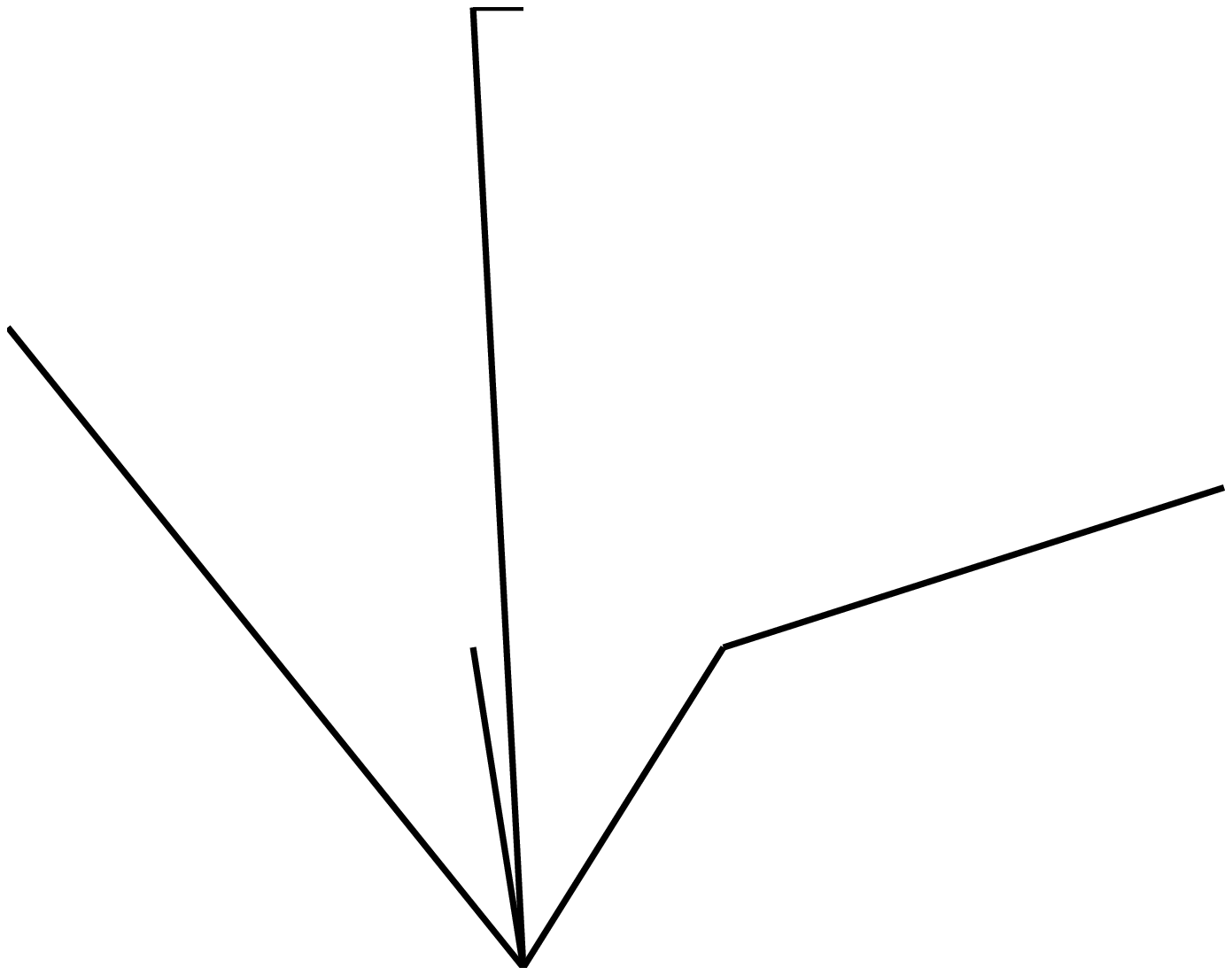}
\includegraphics[scale=.22]{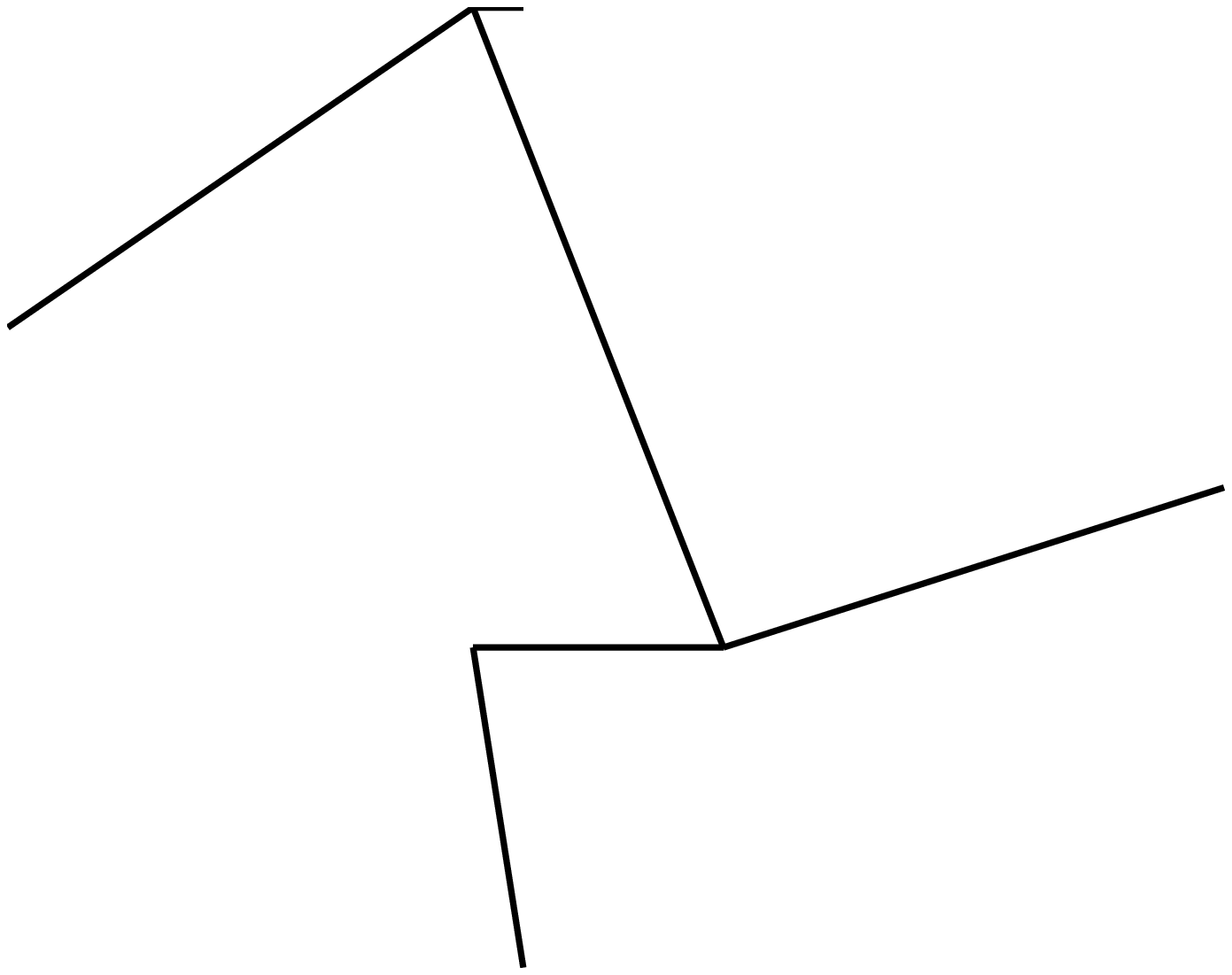}
\includegraphics[scale=.22]{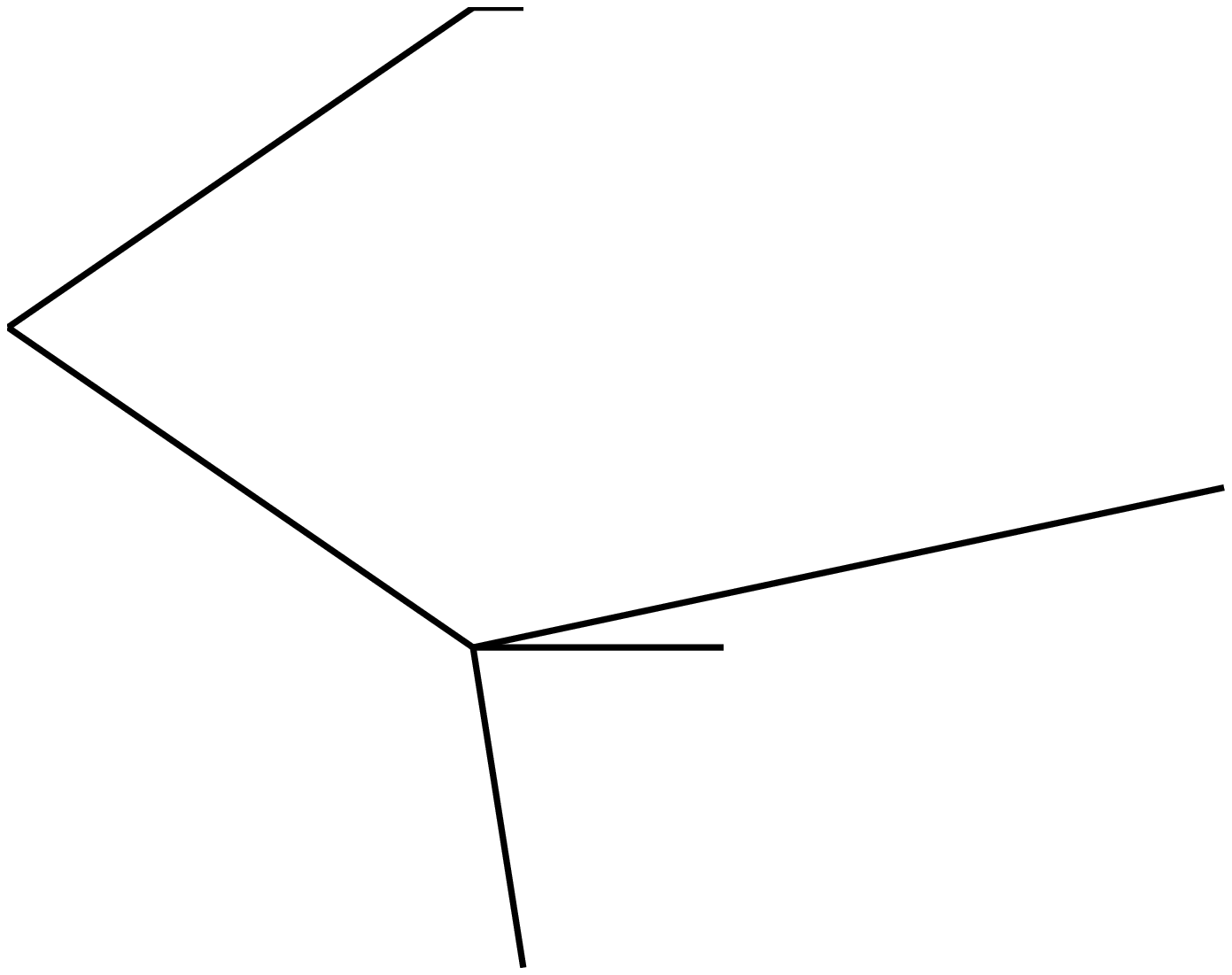}
\ \\
\includegraphics[scale=.22]{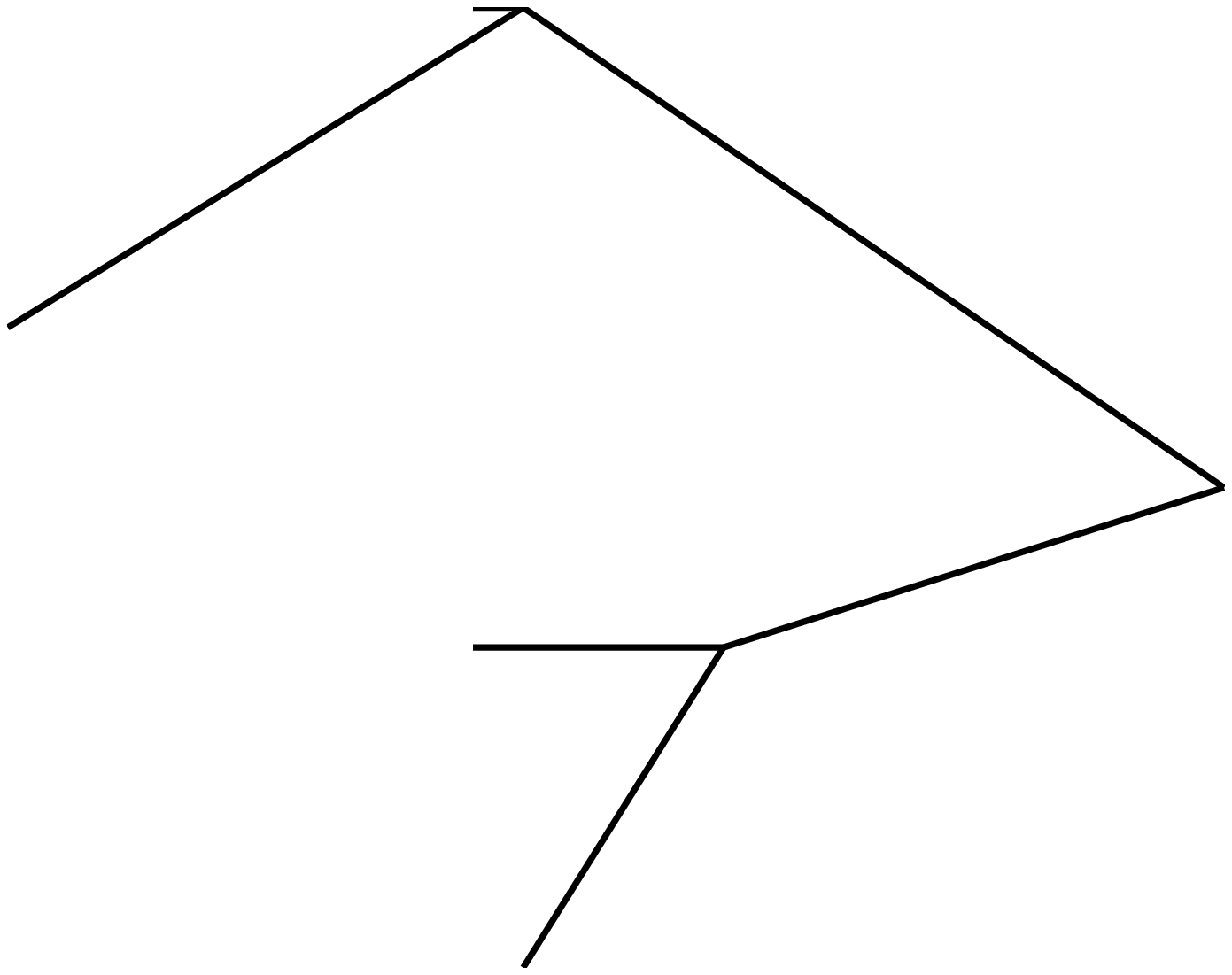}
\includegraphics[scale=.22]{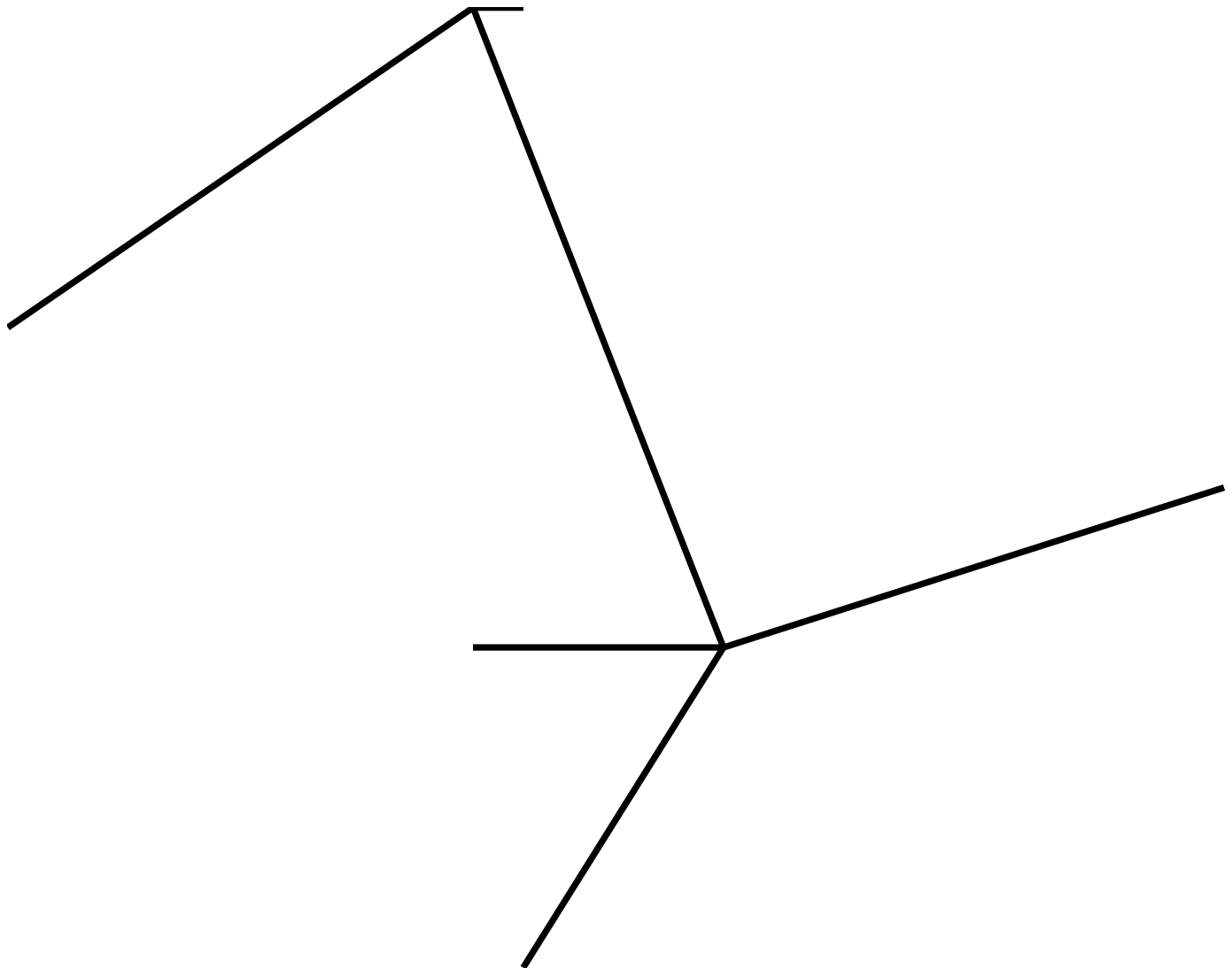}
\includegraphics[scale=.22]{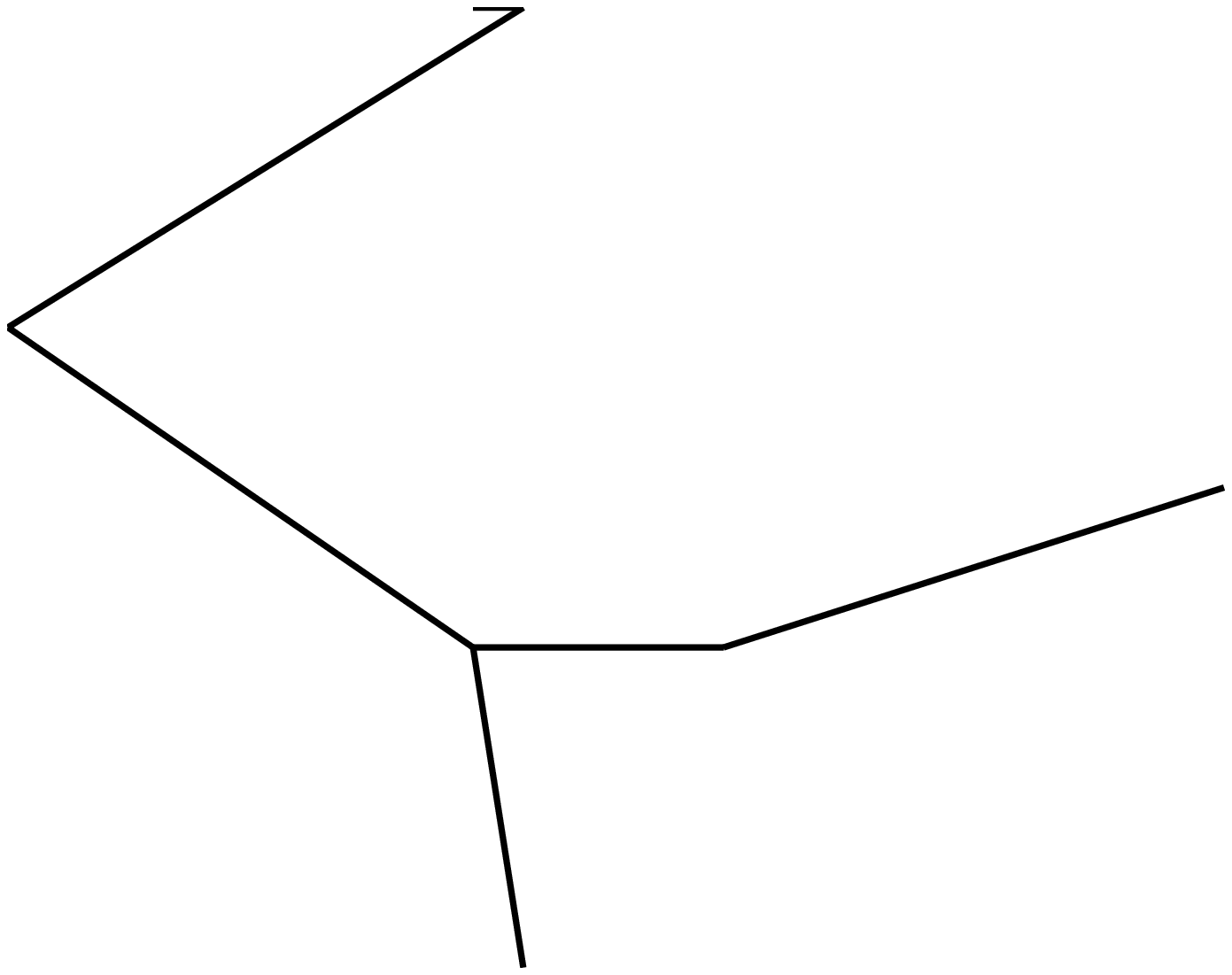}

\caption{Six of the several tests on WMST with the vertices in Fig.~\ref{nev}.}
\label{sixbirds}
\end{figure}

That is why we introduced Assumption~\ref{asspt2}, which one fine day will be either proved or disproved, and in this last case we shall complete what is missing in the algorithm. If the reader is willing to check some outputs of the preprocessing that generated Fig.~\ref{sixbirds}, here is the code in Fig.~\ref{fift}. Notice that its actual algorithm has only 15 lines, whereas the rest just stands for initialisation and formatting. Several pictures, say 12, come by simply entering the command {\tt for count=1:12 test;end} at the prompt.  

\begin{figure}[ht!]
\center
\includegraphics[scale=.33]{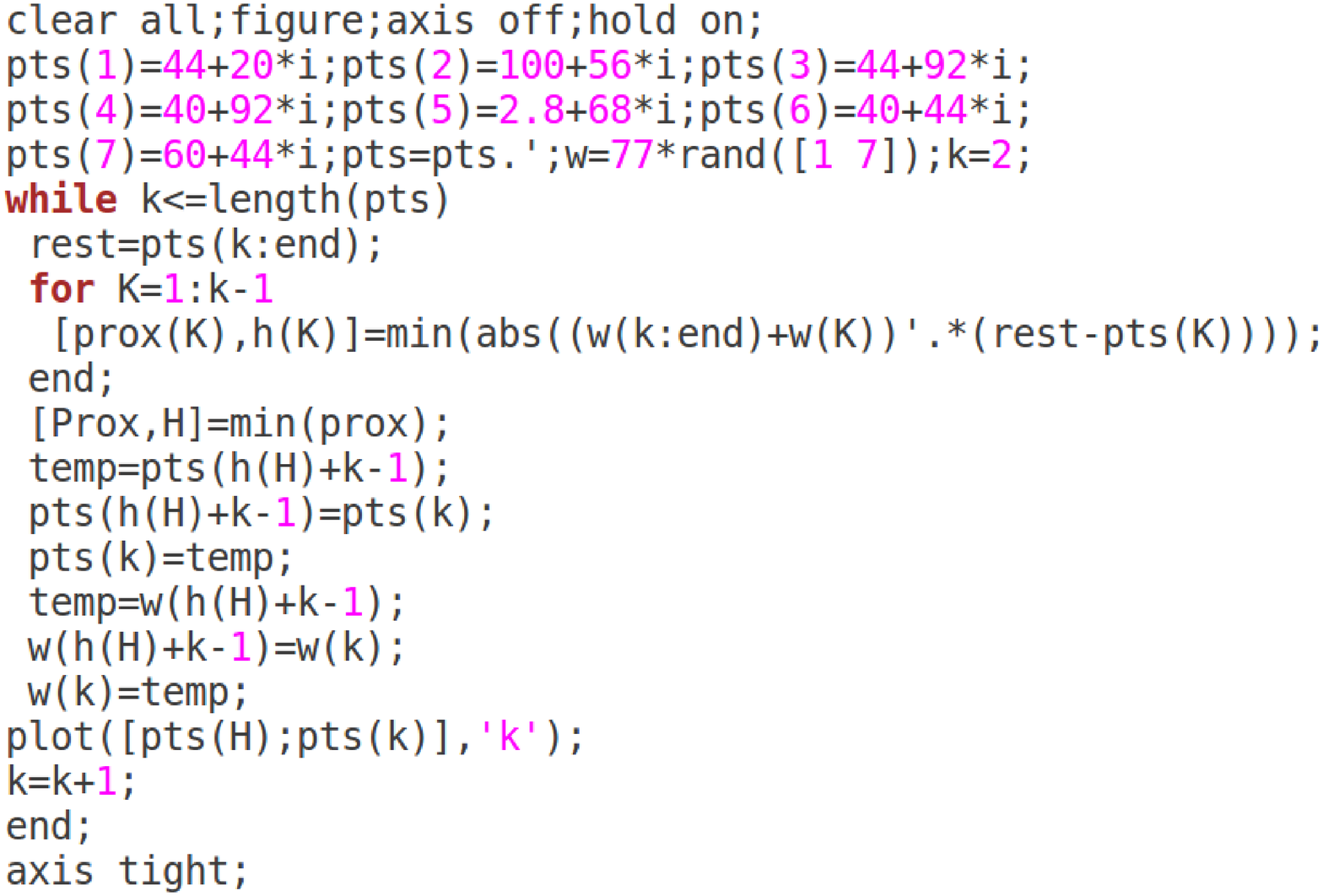}

\caption{Algorithm for test on the WMSTs of the seven vertices in Fig.~\ref{nev}.}
\label{fift}
\end{figure}

Of course, we shall be grateful to whoever forwards us some clue about either Assumption~\ref{asspt2} or~\ref{asspt1}, which was also submitted to tests but here we shall omit them for the sake of concision.

Now we draw some comparisons with GeoSteiner~\cite{Geo}, which dates back from 1997 and had its most recent version 5.1 launched in 2017. To the best of our knowledge GeoSteiner is the only exact and open-source algorithm to find non-weighted Steiner minimal trees. At \url{http://www.geosteiner.com} the reader can download its C-code, instructions and manual. However, it heavily uses Linear Programming methods and they are known to be fast only for random distributions. For instance, poor convergence of the Simplex Method is given by the classical Klee-Minty cube~\cite{KM,GB}. That is why GeoSteiner will converge very slowly if $\sharp V\cong40$ with elements that form a pattern. 

For tests our platform is 7GB of RAM, 960G of HD, microprocessor Intel Core i5 2.5GHz, and operating system Linux Ubuntu 16.04. With this setting GeoSteiner takes 68.84s to generate Fig.~\ref{comps}(a). By choosing adequate weights for $\P$ as shown in Fig.~\ref{comps}(b) we get $\T$, the heuristic WSMT in 0.01367s.

\begin{figure}[!htb]
\centering
\begin{minipage}{.5\textwidth}
\flushleft
\includegraphics[scale=0.45]{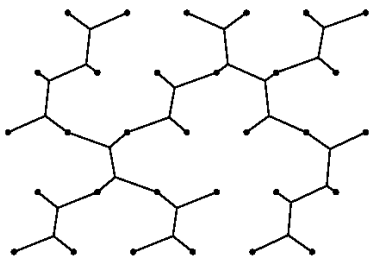}
\end{minipage}%
\begin{minipage}{.5\textwidth}
\flushright
\includegraphics[scale=0.35]{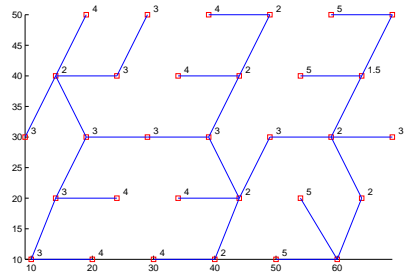}
\end{minipage}
(a)\hfil\hfil(b)

\bigskip

\includegraphics[scale=.33]{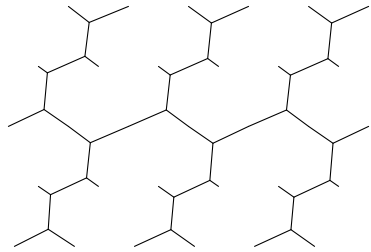}\\

(c)

\caption{Results of (a) SMT by GeoSteiner; (b) WMST by our preprocessing; (c) WSMT by our heuristic.}
\label{comps}
\end{figure}

Though Fig.~\ref{comps}(c) shows a WSMT our program also outputs the Euclidean total length $|\T|=277.2$, which is the minimum as printed by GeoSteiner. By the way, our program also gives $|\P|=318.1$, $||\P||=841.1$ and $||\T||=626.2$, both with ratios above $\sqrt{3}/2$. 

\section{Conclusion}
\label{conc}

It is known that Steiner trees can also be treated in 3d, for instance to minimise costs of underground mining~\cite{AB}, and also in the periodic approach to model crystalline and molecular connections~\cite{AK}. But 2d Steiner trees cannot be doubly periodic, only quasi-periodic instead. For instance, Fig.~\ref{almost}(a) shows a generator taken from \url{https://people.eng.unimelb.edu.au/brazil}, and it consists of copies of two fundamental sub-trees: with yellow and red nodes, represented as black and grey squares in Fig.~\ref{almost}(b), respectively. From the diagram of Fig.~\ref{almost}(b) one easily understands how to continue the quasi-periodic ST: take four copies and connect then with a black square, then repeat the process indefinitely. 

\begin{figure}[!htb]
\centering
\begin{minipage}{.5\textwidth}
\flushleft
\includegraphics[scale=0.65]{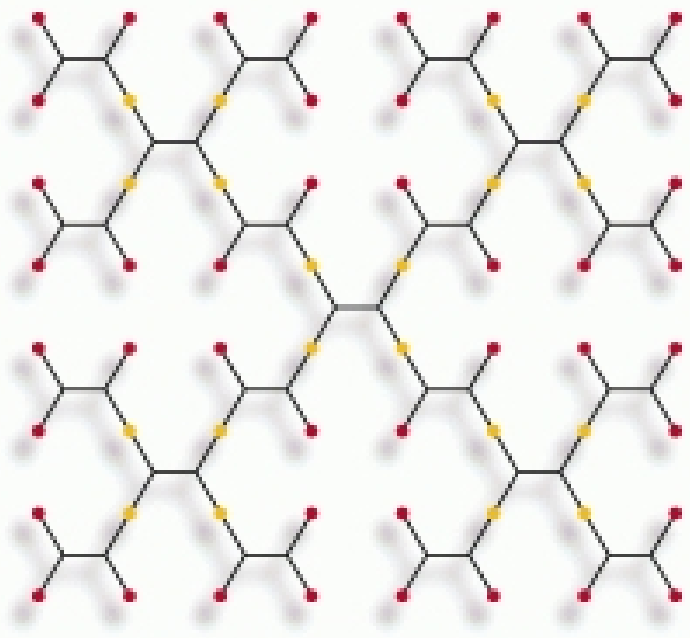}
\end{minipage}%
\begin{minipage}{.5\textwidth}
\flushright
\includegraphics[scale=0.37]{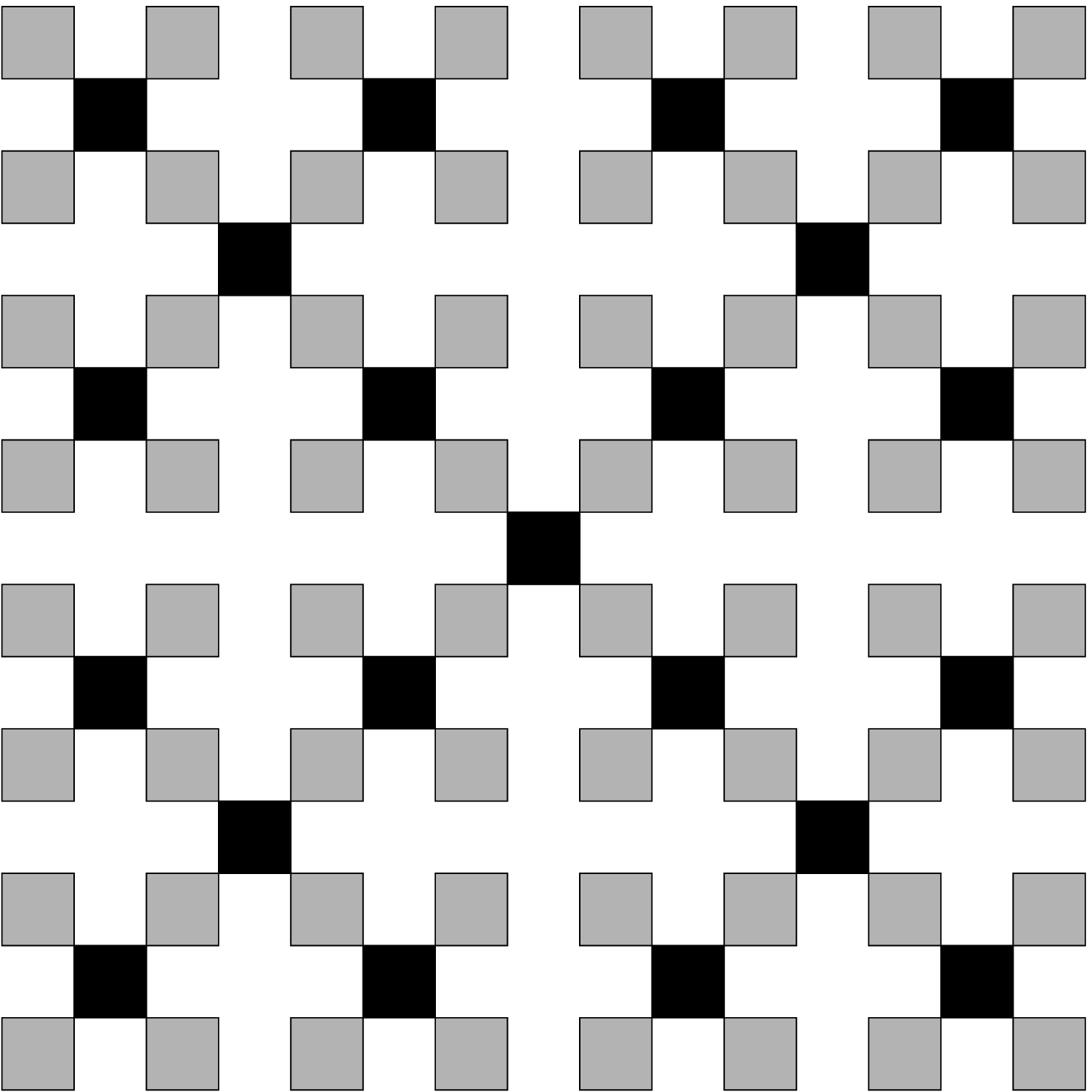}
\end{minipage}

\bigskip

(a)\hfil\hfil(b)\\

\caption{(a) a detail of a quasi-periodic ST; (b) explanatory diagram.}
\label{almost}
\end{figure}

We can also consider STs in non-Euclidean 2d and 3d spaces. For instance, in~\cite{CNR} we showed a practical application of WSMTs as in Definition~\ref{def2} for cities of a flat land whose soil is free from barriers like groundwater and rocky earth. Besides their simple physical interconnection weights were taken as extra costs like land taxes and local maintenance expenses. However, a more realistic approach should consider relief and barriers, namely the cities taken as points in the graph of a function $f:D\to\R$, $D\subset\R^2$. Any shortest connection between two points becomes a geodesic segment on graph($f$), which in many cases can fail to be unique.

Even in Euclidean spaces we can take a non-standard metric, as for the rectilinear 2d and 3d STs, which are useful in VLSI-design with millions of nodes. But already just hundreds of nodes represent an important approach (e.g. in sound and video card design)~\cite{FK,Z}.

For all these variations the greatest advantage of Evolver is its portability. In this work we have presented our algorithms already implemented in MATLAB/Octave (preprocessing) and Evolver (main program). Of course, we aimed at both reproducing the classical soap film experiment and finding the WSMT. But another important contribution is the use of Evolver with its built-in minimisation routines, which enabled us to obtain a very short code. Future works will start from this source code and take advantage of Evolver's facilities in adapting it to many other contexts.

\section*{Acknowledgement}

The first author has been supported by CAPES through his master's scholarship No. 88882.451694/2019-01.

\bibliography{smpt}

\end{document}